\documentclass[journal]{IEEEtran}
\IEEEoverridecommandlockouts
\usepackage{booktabs}
\usepackage{threeparttable}
\usepackage{amsmath,graphicx,cite,amssymb}
\usepackage{amsfonts,multirow,bm,array,setspace}
\usepackage{graphicx,float,cite,amssymb}
\usepackage{multirow,bm,array,setspace}
\usepackage{textcomp}
\usepackage{psfrag}
\usepackage{color}
\usepackage{pstricks}
\usepackage{bbm}
\usepackage{algorithm, algorithmic}
\usepackage{cuted}
\usepackage[caption=false,font=footnotesize]{subfig}
\usepackage{xpatch}
\usepackage{stfloats}

\newcommand{\bH}{\bm{H}}

\newcommand{\bS}{\bm{S}}
\newcommand{\bs}{\bm{s}}
\newcommand{\bI}{\bm{I}}

\newcommand{\bV}{\bm V}
\newcommand{\bu}{\bm u}

\newcommand{\by}{\bm{y}}

\newcommand{\bZ}{\bm{Z}}
\newcommand{\bQ}{\bm{Q}}

\newcommand{\bA}{\bm{A}}
\newcommand{\ba}{\bm{a}}
\newcommand{\bB}{\bm{B}}
\newcommand{\bG}{\bm{G}}
\newcommand{\bM}{\bm{M}}
\newcommand{\bF}{\bm{F}}
\newcommand{\bX}{\bm{X}}

\newcommand{\hh}{\mathrm{H}}

\newcommand{\tr}{\operatorname{tr}}

\newcommand{\bn}{\bm{n}}
\newcommand{\bE}{\bm{E}}
\newcommand{\bP}{\bm{P}}
\newcommand{\be}{\bm{e}}
\newcommand{\vecnotation}[1]{[#1]} 

\newtheorem{proposition}{Proposition}

\def\BibTeX{{\rm B\kern-.05em{\sc i\kern-.025em b}\kern-.08em
    T\kern-.1667em\lower.7ex\hbox{E}\kern-.125emX}}

\makeatletter
\newcommand{\distas}[1]{\mathbin{\overset{#1}{\kern\z@\sim}}}%
\newsavebox{\mybox}\newsavebox{\mysim}
\newcommand{\distras}[1]{%
	\savebox{\mybox}{\hbox{\kern1pt$\scriptstyle#1$\kern1pt}}%
	\savebox{\mysim}{\hbox{$\sim$}}%
	\mathbin{\overset{#1}{\kern\z@\resizebox{\wd\mybox}{\ht\mysim}{$\sim$}}}%
}

\ExplSyntaxOn
\cs_new:Npn \bibColoredItems #1#2
  {
    \clist_map_inline:nn {#2} { \cs_new:cpn {bib@colored@##1} {#1} } 
  }
\ExplSyntaxOff

\newcommand\bib@setcolor[1]{%
  \ifcsname bib@colored@#1\endcsname
    \expandafter\color\expandafter{\csname bib@colored@#1\endcsname}
  \else
    \normalcolor
  \fi
}

\xpatchcmd\@bibitem
  {\item}
  {\bib@setcolor{#1}\item}
  {}{\PatchFailed}

\xpatchcmd\@lbibitem
  {\item}
  {\bib@setcolor{#2}\item}
  {}{\PatchFailed}

\begin{document}

\title{Ellipsoidal Manifold Optimization for Distributed Antenna Beamforming}

\author{
        \IEEEauthorblockN{
        Minhao Zhu, \IEEEmembership{Graduate Student Member,~IEEE}, Kaiming Shen, \IEEEmembership{Senior Member,~IEEE}
        } 
        
        \thanks{
            Manuscript submitted on \today. 

            The authors are with the Future Network of Intelligence Institute (FNii), and the School of Science and Engineering at The Chinese University of Hong Kong, Shenzhen, China (e-mail: minhaozhu@link.cuhk.edu.cn; shenkaiming@cuhk.edu.cn). 
            }
}

\maketitle

\begin{abstract}
This paper addresses the weighted sum-rate (WSR) maximization problem in a downlink distributed antenna system subject to per-cluster power constraints. This optimization scenario presents significant challenges due to the high dimensionality of beamforming variables in dense antenna deployments and the structural complexity of multiple independent power constraints. To overcome these difficulties, we generalize the low-dimensional subspace property—previously established for sum-power constraints—to the per-cluster power constraint case. We prove that all stationary-point beamformers reside in a reduced subspace spanned by the channel vectors of the corresponding antenna cluster. Leveraging this property, we reformulate the original high-dimensional constrained problem into an unconstrained optimization task over a product of ellipsoidal manifolds, thereby achieving significant dimensionality reduction. We systematically derive the necessary Riemannian geometric structures for this specific manifold, including the tangent space, Riemannian metric, orthogonal projection, retraction, and vector transport. Subsequently, we develop a tailored Riemannian conjugate gradient algorithm to solve the reformulated problem. Numerical simulations demonstrate that the proposed algorithm achieves the same local optima as standard benchmarks, such as the weighted minimum mean square error (WMMSE) method and conventional manifold optimization, but with substantially higher computational efficiency and scalability, particularly as the number of antenna clusters increases.
\end{abstract}

\begin{IEEEkeywords}
Manifold optimization, ellipsoidal manifold, distributed antennas beamforming.
\end{IEEEkeywords}

\section{Introduction}

The distributed antenna system has emerged as a promising architecture for next-generation wireless networks, where transmit antennas are placed at different positions throughout the cellular area, as opposed to the conventional transmit antennas that are co-located at the base station (BS) \cite{kerpez1996distributed,lin2014distributed}. The main advantage of distributed antennas is to reduce the distance between user terminals and antennas, thus enhancing the cellular coverage and the received signal power. From an optimization perspective, the beamforming problem for the distributed antenna system does not seem much different much from the conventional, except that the power constraint is now imposed on each antenna cluster---which refers to the set of antennas placed at the same position. Nevertheless, the paper shows that the per-cluster power constraint exerts a substantial impact on the beamforming algorithm design. Fig. \ref{fig:nw_model} illustrates the considered downlink distributed antenna system with $C$ antenna clusters.

\begin{figure}[t]
\centering
\includegraphics[width=0.425\textwidth]{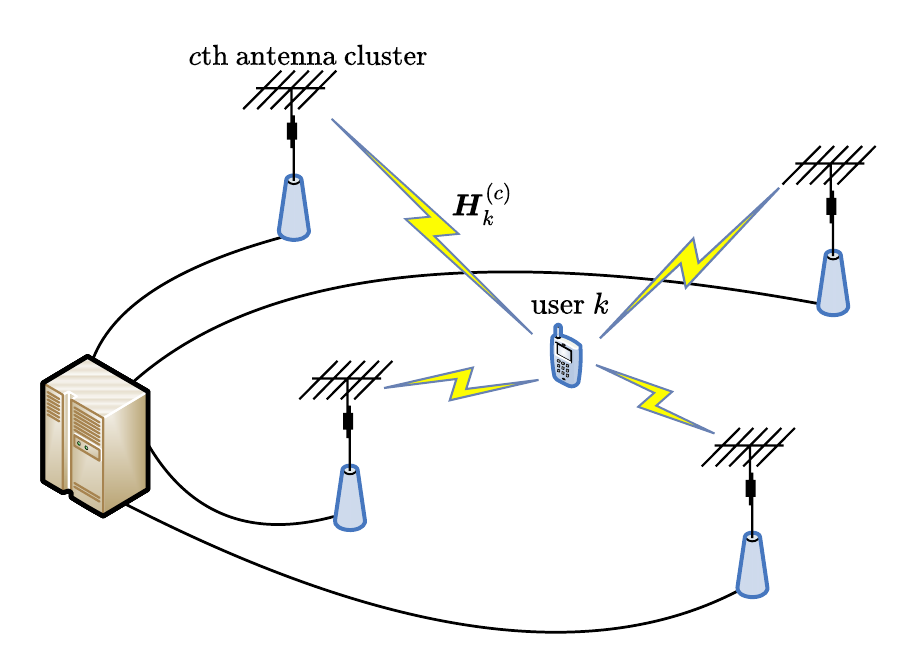}
\caption{Illustration of the downlink channel model with distributed antenna clusters. The BS has $N_t$ transmit antennas divided into $C$ clusters, each with $L$ antennas and power constraint $P_c$. User $k$ with $N_r$ receive antennas is shown as an example.}
\label{fig:nw_model}
\end{figure}

We address the beamforming problem for distributed antennas by means of manifold optimization. Manifold optimization has been recognized as a powerful tool for a broad range of constrained problems \cite{AbsilOpt}, \cite{boumal2023intromanifolds}. Some recent works have shown that the manifold optimization is well suited for the beamforming task in communication systems. In this area, the classic beamforming problem under the sum-power constraint is considered in \cite{sun2024RCG}, a joint passive and active beamforming optimization for a reconfigurable intelligent surface (RIS)-assisted network is considered in \cite{Shtaiwi2023RIS}, and the robust beamforming problem in the presence of channel estimation error is considered in \cite{wang2022robust3D}. We remark that all the above works formulate their beamforming problems as unconstrained problems on the spherical manifolds, so the classic techniques (such as the Riemannian conjugate gradient algorithm \cite{AbsilOpt}) can be readily applied.
Nevertheless, when the number of transmit antennas $N_t$ is large, optimizing directly over high-dimensional spherical manifolds can be computationally demanding, which motivates exploiting structure for dimensionality reduction. A key observation in \cite{zhao2023rethinking} is the low-dimensional subspace property, which reveals that stationary-point beamformers lie in a subspace spanned by the columns of users' channel matrices, enabling a significant dimensionality reduction from $N_t$ to $KN_r$, where $K$ is the number of users and $N_r$ is the number of receive antennas per user. However, this property has only been established for the sum-power constraint case. Motivated by this, we generalize the property to the per-cluster power constraint case. In contrast to the spherical formulations in the above works, the dimensionality reduction transforms the original spherical constraints into ellipsoidal ones, and it can be shown that the new feasible set forms a product ellipsoidal manifold. This change of set geometry, from spherical manifolds to ellipsoidal manifolds, is illustrated in Fig. \ref{fig:manifold_sets}. We then develop a tailored Riemannian conjugate gradient algorithm to solve the reformulated manifold optimization problem.



Related work on distributed antenna optimization under per-cluster power constraints has focused on various aspects. In practical systems such as coordinated multipoint (CoMP) transmission \cite{zhao2013CoMP}, BS antennas are grouped into several clusters, each subject to its own power budget due to hardware limitations or regulatory requirements. For spectral efficiency analysis, downlink performance under a stochastic model is studied in \cite{lin2014distributed}, revealing that selective transmission strategies outperform blanket transmission schemes. For beamforming optimization, a fractional-programming-based and solver-free accelerated algorithm for integrated sensing and communication (ISAC) beamforming under arbitrary per-cluster power constraints is proposed in \cite{Kim2025PGPCs}.
Other works have focused on practical system modeling and convex relaxation: a minimum-mean-square-error-based precoding approach under per-antenna-group constraints is formulated in \cite{mazrouei2013MMSEVP}, and a penalty-based convex framework for per-antenna and per-cluster power restrictions is introduced in \cite{WANG2013penalty}.

Most existing weighted sum-rate (WSR) maximization methods are developed for sum-power constraints, where the transmit power across all antennas is limited by a single budget. Early non-iterative linear precoders for the MIMO broadcast channel include maximum ratio transmission (MRT) \cite{lo1999MRT}, zero-forcing (ZF) \cite{gao2011zf}, and eigen zero-forcing \cite{sun2010ezf}. MRT maximizes received signal power but can incur severe inter-user interference, whereas ZF suppresses inter-user interference at the cost of noise amplification in ill-conditioned channels or at low SNR. Eigen zero-forcing exploits the singular value decomposition (SVD) of user channels to balance interference suppression and signal enhancement. While these methods are computationally efficient, they are suboptimal for WSR maximization, which motivates iterative algorithms such as weighted minimum mean square error (WMMSE) and fractional programming (FP) that jointly optimize transmit and receive variables for improved WSR.

The WMMSE algorithm \cite{christensen2008WMMSE}, \cite{shi2011WMMSE} reformulates the non-convex WSR problem into an equivalent mean square error (MSE) minimization problem solved via block coordinate descent (BCD) \cite{bertsekas1999nonlinear} with closed-form updates. However, its complexity scales quadratically with $N_t$, limiting scalability in massive MIMO. A recent work \cite{zhao2023rethinking} proposes a reduced WMMSE algorithm (referred to as RWMMSE in \cite{zhao2023rethinking}), exploiting a low-dimensional subspace property under sum-power constraints to solve a reduced MSE problem. FP methods \cite{shen2018fp1}, closely related to WMMSE \cite{shen2018fp2}, have also been applied to WSR maximization by reformulating the WSR problem via Lagrangian dual \cite{shen2019matrixFP} and quadratic transforms \cite{shen2024qt} into tractable BCD subproblems. However, the complexity of FP methods also scales quadratically with $N_t$. To tackle this issue, the FastFP method proposed in \cite{chen2025fastfp} employs a nonhomogeneous bound \cite{sun2017MM} to avoid large matrix inversions and the Nesterov's extrapolation \cite{nestrov2018lectures} to accelerate convergence.

\begin{figure}[t]
\centering
\includegraphics[width=0.46\textwidth]{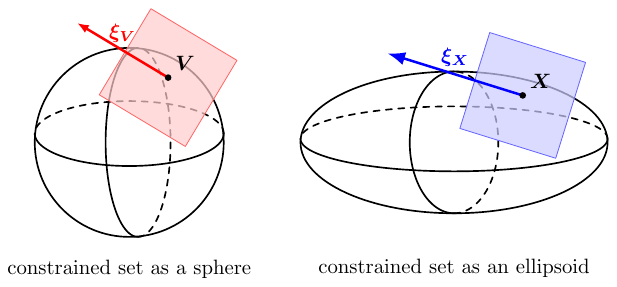}
\caption{Illustration of the feasible set in the original and low-dimensional reformulated problems. On the left-hand side, the feasible set forms a product of high-dimensional spherical manifolds in $\mathbb{C}^{N_t\times Kd_k}$. On the right-hand side, by leveraging the low-dimensional subspace property, the feasible set forms a product of high-dimensional ellipsoidal manifolds in $\mathbb{C}^{CKN_r\times Kd_k}$.}
\label{fig:manifold_sets}
\end{figure}

Motivated by the above discussion, the main contributions of this work are summarized as follows:
\begin{itemize}
    \item \textbf{Low-dimensional structure under per-cluster power constraints.} We generalize the low-dimensional subspace property established in \cite{zhao2023rethinking} for sum-power constraints to the more general per-cluster power constraints scenario. Specifically, we prove that, at any nontrivial stationary point, the optimal beamformers for each antenna cluster $c$ lie in the reduced subspace spanned by the columns of channels from cluster $c$. This extension reveals a structured low-dimensionality inherent to per-cluster power constraints, enabling significant dimensionality reduction while preserving optimality conditions, which is crucial for scalability in massive MIMO systems with clustered antenna architectures.
    \item \textbf{Reduced manifold reformulation via a product of ellipsoids.} Leveraging this property, we reformulate the nonconvex WSR maximization problem as an unconstrained optimization over a product of $C$ ellipsoid manifolds, a novel Riemannian submanifold tailored to per-cluster power constraints. We derive the complete Riemannian geometry for this manifold, including tangent spaces, metrics, orthogonal projections, retractions, and vector transports. Based on these, we develop an algorithm that incorporates Hestenes-Stiefel updates with Armijo backtracking line search for guaranteed convergence on manifolds, directly addressing the high-dimensional challenges unmet by prior manifold methods \cite{sun2024RCG} that are limited to sum-power constraints.
    \item \textbf{Proposed Riemannian conjugate gradient algorithm and scalability gains.} Through extensive numerical simulations in a single-cell massive MIMO downlink with large numbers of transmit antennas $N_t$, we demonstrate that our proposed algorithm achieves the same local optima as benchmarks (WMMSE \cite{shi2011WMMSE}, Riemannian conjugate gradient \cite{sun2024RCG}) under per-cluster power constraint scenarios, while reducing per-iteration complexity and elapsed time—particularly as $C$ increases. We further show that our algorithm consistently requires lower convergence time than Riemannian conjugate gradient \cite{sun2024RCG} across different system scales, including varying the number of users $K$ and the number of antennas per cluster $L$, demonstrating the scalability of the proposed low-dimensional reformulation.
\end{itemize}

The remainder of this paper is organized as follows. Section \ref{sec:S_M} presents the system model and formulates the WSR maximization problem under per-cluster power constraints. Section \ref{sec:convert_to_mani} establishes the low-dimensional subspace property and converts the original problem into an equivalent optimization over a product of ellipsoidal manifolds. Section \ref{sec:mani_algo} develops our proposed algorithm based on the Riemannian conjugate gradient algorithm. Section \ref{sec:nume_result} presents the numerical results. Finally, Section \ref{sec:conclusion} concludes the paper.

Here and throughout, boldface uppercase and lowercase letters denote matrices and vectors, respectively. For a vector $\ba$, $\ba^\hh$ is its conjugate transpose, and $\ba^\top$ is its transpose. For a matrix $\bA$, $\bA^\hh$ is its conjugate transpose, $\bA^\top$ is its transpose, $\|\bA\|_F$ is its Frobenius norm, $\Re(\bA)$ is its real part, $\operatorname{Col}(\bA)$ is its column space, $\operatorname{null}(\bA)$ is its null space, $\operatorname{rank}(\bA)$ is its rank, and $\mathrm{d}\bA$ is its matrix differential. For a square matrix $\bA$, $\tr(\bA)$ is its trace and $\det(\bA)$ is its determinant. The symbols $\bI$ and $\mathbf{0}$ denote the identity matrix and the zero matrix of appropriate dimensions. For a function $f(x)$, $\nabla f(x)$ is its Euclidean gradient. For a manifold $\mathcal{M}=\{\bX\in\mathbb{C}^{m\times n}:F(\bX)=0\}$, $F$ is the mapping: $\mathbb{C}^{m\times n}\to\mathbb{R}$. The differential of $F(\bX)$ is denoted by $\mathrm{D}F(\bX)$, and $\mathrm{D}F(\bX)[\bZ]$ represents the directional derivative of $F$ at $\bX$ along the direction $\bZ$.

\section{System Model}\label{sec:S_M}

Assume that the BS is associated with $K$ downlink user terminals. The BS has $N_t$ transmit antennas while each user terminal has $N_r$ receive antennas. Further, assume that the $N_t$ transmit antennas are divided into $C$ clusters; these clusters are placed at distributed positions. Note that each cluster has its own power constraint $P$. Each cluster comprises $L$ antennas, so we have $LC = N_t$. 

The channel from the $c$th antenna cluster to user $k$ is denoted by $\bH^{(c)}_k\in\mathbb C^{N_r\times L}$, for $c=1,\ldots,C$. These channel matrices are stacked horizontally with respect to each user $k$ as
\begin{equation}
    \bH_k = \begin{bmatrix}
        \bH^{(1)}_k & \bH^{(2)}_k & \ldots & \bH^{(C)}_k
    \end{bmatrix}\in\mathbb C^{N_r\times N_t}.
\end{equation}
Moreover, the channel matrices are stacked vertically with respect to each antenna cluster $c$ as
\begin{equation}
    \bH^{(c)} = \begin{bmatrix}
        \bH^{(c)}_1 \\ \vdots \\ \bH^{(c)}_K
    \end{bmatrix}\in\mathbb C^{KN_r\times L},
\end{equation}
which is assumed to be full row rank, i.e., $\operatorname{rank}(\bH^{(c)})=KN_r$.

Denote by $d_k\in\min\{N_t,N_r\}$ the number of downlink data streams assigned to user $k$. Let $\bV^{(c)}_{k}\in\mathbb C^{L\times d_k}$ be the transmit beamformer of the $c$th cluster of antennas for user $k$. The above beamforming matrices are vertically stacked for each user $k$ as
\begin{equation}
    \bV_k =
    \begin{bmatrix}
        \bV_{k}^{(1)} \\
        \vdots\\
        \bV_{k}^{(C)}
    \end{bmatrix}\in\mathbb C^{N_t\times d_k},
\end{equation}
and are horizontally stacked for each antenna cluster $c$ as
\begin{equation}
{\bV}^{(c)} = \left[\bV^{(c)}_1, \ldots, \bV^{(c)}_K\right] \in \mathbb{C}^{L \times Kd_k}.
\end{equation}
Moreover, denote by $\bs_k\in\mathbb C^{d_k}$ the modulated signal vector intended for user $k$, and $\bn_{k}\sim\mathcal {CN}(\mathbf{0}, \sigma^2_k\bI)$ the background noise at user $k$, where $\sigma^2_k$ is the noise power. Thus, the received signal of user $k$ can be computed as
\begin{align}
\by_k & =\underbrace{\bH_{k}\bV_{k}\bs_ {k}}_{\mathrm{desired\ signal}}+\underbrace{\bH_{k}\sum_{j=1, j\neq k}^{K}\bV_{j}\bs_{j}}_{\mathrm{multi-user\ interference}}+\bn_{k}.
\end{align}
An achievable rate $R_k$ for user $k$ is given by
\begin{equation}
R_k = \log\det\left(\bI+\bV_{k}^\hh\bH_{k}^\hh\bF_{k}^{-1}\bH_{k}\bV_{k}\right), \label{eq:rate}
\end{equation}
where
\begin{equation}
\bF_k=\sigma_k^2\bI+\sum_{j=1,j\neq k}^K\bH_k\bV_j\bV_j^\hh\bH_k^\hh.
\end{equation}
Each user $k$ is assigned a data rate weight $\omega>0$ in accordance with its priority. We seek the optimal beamformers $\{\bV_k\}$ to maximize the weighted sum rates
\begin{subequations}
\label{eq:original}
\begin{align}
\underset{{\bV}}{\text{maximize}} &\qquad \sum_{k=1}^K\omega_kR_k  \label{eq:wsr}\\
\text{subject to} & \qquad \|{\bV}^{(c)}\|_F^2\leq P_c,  \;\text{for}\;c=1,\ldots,C, \label{eq:pgpc}
\end{align}
\end{subequations}
where $P_c$ is the power constraint on the $c$th antenna cluster, and $\bV$ is the variable tuple $\begin{bmatrix}(\bV^{(1)})^\top , \ldots , (\bV^{(C)})^\top\end{bmatrix}^\top$.


\section{Conversion to Manifold Problems}\label{sec:convert_to_mani}

The goal of this section is to recast problem \eqref{eq:original} into a manifold optimization problem in order to get rid of the power constraint. We first show that the power constraint \eqref{eq:pgpc} must be tight for a stationary point solution, so it boils down to limiting the solution to a spherical manifold. Next, we propose further transforming the spherical manifold into an ellipsoidal manifold, thereby reducing the dimension and complexity of the beamforming problem.

\subsection{Spherical Manifold}

The key step is to show that the equality must hold in \eqref{eq:pgpc} when we have arrived at a stationary point solution of the problem, as specified in the following proposition.

\begin{proposition}
\label{prop:power tight}
    For any stationary solution $\bV^*$ to problem \eqref{eq:original}, the power constraint must be tight, i.e., 
    \begin{equation}
        \|({\bV}^{(c)})^*\|_F^2= P_c,\;\text{for}\;c=1,\ldots,C.
    \end{equation}
\end{proposition}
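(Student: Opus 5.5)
The plan is to work with the KKT conditions of problem \eqref{eq:original}, show that every multiplier is strictly positive, and conclude by complementary slackness. Let $f(\bV)=\sum_k\omega_kR_k$. At a stationary point $\bV^*$ there exist $\mu_c\ge 0$ with
\begin{equation*}
\nabla_{\bV^{(c)}} f(\bV^*)=\mu_c(\bV^{(c)})^*,\qquad \mu_c\bigl(\|(\bV^{(c)})^*\|_F^2-P_c\bigr)=0,\qquad c=1,\ldots,C.
\end{equation*}
So it suffices to show $\mu_c>0$ for every $c$. I would exclude the trivial point $\bV^*=\mathbf{0}$, which is degenerate: $f$ is quadratic to leading order there, so its gradient vanishes. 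More precisely, I would assume $\bH_k\bV_k^*\neq\mathbf{0}$ for some $k$, consistent with the "nontrivial stationary point" wording in the contributions.

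\textbf{Step 1 (structure of the gradient).} Differentiating the log-det rates shows that the Euclidean gradient with respect to $\bV_k$ has the form
\begin{equation*}
\nabla_{\bV_k}f=\sum_{j=1}^K\bH_j^\hh\bB_{j,k}
\end{equation*}
for certain matrices $\bB_{j,k}\in\mathbb C^{N_r\times d_k}$ built from $\bF_j^{-1}$, the MMSE-type terms and the weights. These matrices do not depend on the cluster index. Restricting to the rows of cluster $c$ gives
\begin{equation*}
\nabla_{\bV^{(c)}_k}f=(\bH^{(c)})^\hh\bB_k,\qquad \bB_k=[\bB_{1,k};\ldots;\bB_{K,k}]\in\mathbb C^{KN_r\times d_k},
\end{equation*}
with the same $\bB_k$ for all $c$.

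\textbf{Step 2 (a zero multiplier propagates to all clusters).} Suppose $\mu_c=0$ for some $c$. Then $(\bH^{(c)})^\hh\bB_k=\mathbf 0$ for all $k$. Since $\bH^{(c)}$ has full row rank, $(\bH^{(c)})^\hh$ is injective, so $\bB_k=\mathbf 0$ for all $k$. Hence the full gradient $\nabla_{\bV}f(\bV^*)$ vanishes, and in particular every cluster's gradient is zero.

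\textbf{Step 3 (contradiction via radial scaling).} Consider $g(t)=f(t\bV^*)$. Scaling all beamformers by $t$ is equivalent to keeping $\bV^*$ and replacing each noise power $\sigma_k^2$ by $\sigma_k^2/t^2$. Each $R_k$ is nondecreasing as the noise shrinks, and it is strictly increasing when $\bH_k\bV_k^*\ne\mathbf 0$. Writing $R_k=\log\det(\sigma_k^2\bI+\sum_j\bH_k\bV_j\bV_j^\hh\bH_k^\hh)-\log\det\bF_k$ and differentiating in $t$ makes this explicit. It gives
\begin{equation*}
g'(1)=2\Re\,\tr\bigl((\bV^*)^\hh\nabla_{\bV}f(\bV^*)\bigr)>0,
\end{equation*}
which contradicts Step 2. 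Therefore $\mu_c>0$ for all $c$, and complementary slackness yields $\|(\bV^{(c)})^*\|_F^2=P_c$.

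The main obstacle is Step 3, namely proving strict positivity of $g'(1)$ and pinning down exactly which degenerate points must be excluded. I would first derive $g'(1)$ in closed form as $\sum_k\omega_k\tr\bigl(\bF_k^{-1}\bH_k\bV_k\bV_k^\hh\bH_k^\hh(\cdots)\bigr)$ and show it is a sum of traces of positive semidefinite products. I would then check that it vanishes only when $\bH_k\bV_k^*=\mathbf 0$ for all $k$. Such points give zero rate and are treated as trivial stationary points in the statement.
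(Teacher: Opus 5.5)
Your proof is correct, and it takes a different route from the paper's. Both arguments assume $\mu_c=0$ and look for a contradiction, but the paper stays inside cluster $c$. It picks $\bu\in\operatorname{null}(\bH_{-k}^{(c)})$ with $\bH_k^{(c)}\bu\neq\mathbf 0$, ruling out the alternative by a probability-zero argument on the Rayleigh entries. It then perturbs only $\bV_k^{(c)}$ by $\epsilon\bu\be^\hh$, which leaves every other user's signal and interference unchanged, and argues that the resulting increase of $R_k$ contradicts $\nabla_{\bV^{(c)}}f(\bV^*)=\mathbf 0$.

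You instead use the cluster-independent factor $\bB_k$ in $\nabla_{\bV_k^{(c)}}f=(\bH^{(c)})^\hh\bB_k$, together with injectivity of $(\bH^{(c)})^\hh$, to spread a vanishing gradient from one cluster to all clusters. After that, the single-budget radial-scaling argument (which the paper says ``no longer works when $C\ge2$'') applies unchanged. One gets $g'(1)=2\sum_k\omega_k\sigma_k^2\tr(\bF_k^{-1}-\bS_k^{-1})$, which is strictly positive unless $\bH_k\bV_k^*=\mathbf 0$ for all $k$.

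What your route buys:
\begin{itemize}
\item It is deterministic. Full row rank of $\bH^{(c)}$ already excludes the paper's Case 1, since $\operatorname{null}(\bH_{-k}^{(c)})\subseteq\operatorname{null}(\bH_k^{(c)})$ would force $L-(K-1)N_r\le L-KN_r$.
\item It is genuinely first order. A strict increase of the WSR for small $\epsilon>0$ along one direction does not by itself contradict a zero gradient, because the gain can be of second order in $\epsilon$.
\item It isolates the exact exceptional set. At any point with $\bH_k\bV_k^*=\mathbf 0$ for all $k$ (e.g.\ $\bV^*=\mathbf 0$), the whole gradient vanishes. These are KKT points with all $\mu_c=0$ and possibly slack power, so your ``nontrivial'' hypothesis is necessary, and the proposition as stated needs it too.
\end{itemize}

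The paper's perturbation idea does have its own merit once it is made first order. As $\bu$ ranges over $\operatorname{null}(\bH_{-k}^{(c)})$, the vectors $\bH_k^{(c)}\bu$ sweep all of $\mathbb C^{N_r}$. So a zero gradient on the single slack cluster forces $\bH_k\bV_k^*=\mathbf 0$ user by user, via explicit interference-free improving directions. As written, however, the paper's argument needs exactly the repairs your approach sidesteps.
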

\begin{IEEEproof}
    See Appendix \ref{appendix:power_tight}.
\end{IEEEproof}

We remark that the special case of the above proposition when $C=1$ has been shown in the literature \cite{zhao2023rethinking, sun2024RCG}. Actually, this special case is quite easy to see: if $\|{\bV}^{(1)}\|_F^2< P_1$, then we can scale up all the ${\bV}^{(1)}_k$'s simultaneously to further increase $R_k$, so the current point cannot be a stationary point. However, the above argument no longer works when $C\ge2$. Thus, the result of Proposition \eqref{prop:power tight} is a nontrivial generalization.

We then rewrite problem \eqref{eq:original} as
\begin{subequations}
\label{eq:equality}
\begin{align}
\underset{{\bV}}{\text{maximize}}  &\quad \sum_{k=1}^K\omega_kR_k  \\
\text{subject to} &\quad \|{\bV}^{(c)}\|_F^2=P_c,  \;\text{for}\;c=1,\ldots,C.
\end{align}
\end{subequations}
The new constraint \eqref{eq:original} can be interpreted as a manifold condition. Specifically, each power constraint $P_c$ induces a spherical manifold:
\begin{equation}
    \mathcal{S}_c = \left\{{\bV}^{(c)} : \|{\bV}^{(c)}_k\|_F^2=P_c\right\}.
\end{equation}
Accordingly, \eqref{eq:original} boils down to a product manifold:
\begin{equation}
\mathcal{S} = \mathcal{S}_1 \times \mathcal{S}_2 \times \cdots \times \mathcal{S}_C.
\end{equation}
As such, the constrained problem \eqref{eq:equality} can be rewritten as an unconstrained problem with $\bV$ limited to $\mathcal S$:
\begin{equation}
\underset{{\bV}\in\mathcal{S}}{\text{minimize}} \quad -\sum_{k=1}^K\omega_kR_k.\label{eq:spheres}
\end{equation}
Note that we write this new problem in the minimization form by the convention of manifold optimization \cite{AbsilOpt}.

For the single-cluster case with $C=1$ in \cite{sun2024RCG}, $\mathcal{S}$ reduces to a spherical manifold---which has been extensively considered in the literature \cite{zhao2023rethinking}. For the general product manifold $\mathcal S$ with $C\ge2$, we can extend the existing algorithm for it because it is an embedded submanifold of the ambient Euclidean space $\mathbb C^{N_t\times Kd_k}$ \cite{AbsilOpt}.

\subsection{Ellipsoidal Manifold}

From a manifold optimization perspective, the new problem \eqref{eq:spheres} is easier to tackle since the constraint has been removed. We now show that the problem can be further simplified by converting the spherical manifold $\mathcal M$ to an ellipsoidal manifold of a lower dimension. 

Introduce a new variable $\bX^{(c)}_k\in\mathbb C^{N_r\times d_k}$ that corresponds to each $\bV^{(c)}_k\in\mathbb C^{L\times d_k}$. Similarly, we define
\begin{equation}
    \bX_k =
    \begin{bmatrix}
        \bX_{k}^{(1)} \\
        \vdots\\
        \bX_{k}^{(C)}
    \end{bmatrix}\in\mathbb C^{CN_r\times d_k}
\end{equation}
and 
\begin{equation}
{\bX}^{(c)} = \left[\bX^{(c)}_1, \ldots, \bX^{(c)}_K\right] \in \mathbb{C}^{KN_r \times Kd_k},
\end{equation}
and the tuple $({\bX}^{(1)},\ldots,{\bX}^{(C)})$ is denoted by $\bX$. By substituting
\begin{equation}
    \bV^{(c)}_k = (\bH^{(c)})^\hh\bX^{(c)}_k \label{eq:Vkc_transform}
\end{equation}
into problem \eqref{eq:spheres}, we obtain a manifold problem of $\bX$:
\begin{equation}
\underset{{\bX}\in\mathcal{M}}{\text{minimize}} \quad -\sum_{k=1}^K\omega_k\hat{R}_k, \label{eq:reform ellipse}
\end{equation}
where the data rate is now computed as
\begin{multline}
    \hat{R}_k = \log\det\Biggl(\bI+ \bX_{k}^\hh{\bG}_{k}^\hh\\
    \biggl(\sigma_k^2\bI +\sum_{j=1,j\neq k}^K {\bG}_{k}\bX_j\bX_j^\hh{\bG}_{k}^\hh\biggr)^{-1}{\bG}_{k}\bX_{k}\Biggr) \label{eq:rate_reform}
\end{multline}
with
\begin{equation}
    \bG_k = \left[\bH_{k}^{(1)} (\bH^{(1)})^\hh,\ldots, \bH_{k}^{(C)} (\bH^{(C)})^\hh \right],
\end{equation}
and the product manifold now becomes
\begin{equation}
\mathcal{M} = \mathcal{M}_1 \times \mathcal{M}_2 \times \cdots \times \mathcal{M}_C,
\label{eq:Manifold}
\end{equation}
where each $\mathcal M_c$ is an ellipsoidal manifold:
\begin{equation}
\mathcal{M}_c = \left\{{\bX}^{(c)}:\operatorname{tr}\left(({\bX}^{(c)})^\hh \bQ^{(c)}{\bX}^{(c)}\right)=P_c\right\},
\label{eq:power_reform}
\end{equation}
where 
\begin{equation}
    \bQ^{(c)} = \bH^{(c)}(\bH^{(c)})^\hh.
\end{equation}

\begin{proposition}
\label{prop:ellipsoid}
The set $\mathcal{M}_c$ defined in \eqref{eq:power_reform} is a smooth embedded submanifold of the ambient Euclidean space $\mathbb{C}^{KN_r\times Kd_k}$. 
\end{proposition}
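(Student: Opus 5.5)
We will use the regular level set theorem (submersion theorem) in the form used by Absil et al.\ \cite{AbsilOpt}. First, we identify the ambient space $\mathbb{C}^{KN_r\times Kd_k}$ with the real Euclidean space $\mathbb{R}^{2KN_r\times Kd_k}$, equipped with the real inner product $\langle \bA,\bB\rangle=\Re\,\tr(\bA^\hh\bB)$. Next, we define the real-valued map
\begin{equation}
F(\bX^{(c)})=\tr\bigl((\bX^{(c)})^\hh\bQ^{(c)}\bX^{(c)}\bigr)-P_c ,
\end{equation}
so that $\mathcal{M}_c=F^{-1}(0)$. Since $\bQ^{(c)}$ is Hermitian, $F$ is real-valued. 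It is also a real quadratic polynomial in the real and imaginary parts of the entries of $\bX^{(c)}$, so it is smooth ($C^\infty$) on the ambient space. The statement then follows once we show that $0$ is a regular value of $F$: that is, $\mathcal{M}_c$ is nonempty and $\mathrm{D}F(\bX^{(c)})$ is surjective onto $\mathbb{R}$ at every $\bX^{(c)}\in\mathcal{M}_c$.

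The key computation is the directional derivative. For any direction $\bZ$,
\begin{equation}
\mathrm{D}F(\bX^{(c)})[\bZ]=\tr(\bZ^\hh\bQ^{(c)}\bX^{(c)})+\tr((\bX^{(c)})^\hh\bQ^{(c)}\bZ)=2\Re\,\tr\bigl(\bZ^\hh\bQ^{(c)}\bX^{(c)}\bigr).
\end{equation}
This identifies the Euclidean gradient as $\nabla F(\bX^{(c)})=2\bQ^{(c)}\bX^{(c)}$. Because the codomain is one-dimensional, surjectivity of this linear map is equivalent to $\bQ^{(c)}\bX^{(c)}\neq\mathbf{0}$. To check this, choose $\bZ=\bX^{(c)}$, which gives $\mathrm{D}F(\bX^{(c)})[\bX^{(c)}]=2\tr((\bX^{(c)})^\hh\bQ^{(c)}\bX^{(c)})=2P_c>0$ on $\mathcal{M}_c$, assuming $P_c>0$. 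Scaling this $\bZ$ by real numbers then reaches every real value, so the differential is surjective. This is the same scaling argument that works for the sphere; it does not need the rank assumption.

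Nonemptiness is where the full row rank assumption $\operatorname{rank}(\bH^{(c)})=KN_r$ enters. It implies that $\bQ^{(c)}=\bH^{(c)}(\bH^{(c)})^\hh$ is Hermitian positive definite. Hence $\tr(\bX^\hh\bQ^{(c)}\bX)>0$ for every nonzero $\bX$, and a suitable scaling of any nonzero matrix lies in $\mathcal{M}_c$. Positive definiteness also shows that $\mathcal{M}_c$ is a genuine compact ellipsoid: it is the image of the sphere $\{\bm{Y}:\|\bm{Y}\|_F^2=P_c\}$ under the linear isomorphism $\bm{Y}\mapsto(\bQ^{(c)})^{-1/2}\bm{Y}$. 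This gives an alternative proof, since a diffeomorphic image of an embedded submanifold is again an embedded submanifold.

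Applying the submersion theorem, $\mathcal{M}_c$ is a smooth embedded submanifold of real codimension one. As a by-product we obtain the tangent space
\begin{equation}
T_{\bX^{(c)}}\mathcal{M}_c=\{\bZ:\Re\,\tr(\bZ^\hh\bQ^{(c)}\bX^{(c)})=0\},
\end{equation}
which the subsequent geometric derivations will need. The only real point of care is the complex-to-real identification, so that the differential is treated as a real-linear map. Beyond that, no step should be a serious obstacle.
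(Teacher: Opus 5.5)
Your proposal is correct and matches the paper's proof essentially step for step: both write $\mathcal{M}_c=F^{-1}(0)$, compute $\mathrm{D}F(\bX^{(c)})[\bZ]=2\Re\,\tr(\bZ^\hh\bQ^{(c)}\bX^{(c)})$, and invoke the regular level set theorem \cite[Proposition 3.3.3]{AbsilOpt}. The only cosmetic difference is how surjectivity is certified: you evaluate at $\bZ=\bX^{(c)}$ to get $2P_c>0$, whereas the paper uses the same trace identity to conclude $\nabla F(\bX^{(c)})=2\bQ^{(c)}\bX^{(c)}\neq\mathbf{0}$.
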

\begin{IEEEproof}
See Appendix~\ref{appendix:ellipsoid}.
\end{IEEEproof}

The following proposition is the building block of this section:

\begin{proposition}
\label{prop:reform_equiv}
Problem \eqref{eq:reform ellipse} is equivalent to problem \eqref{eq:equality} in the sense that $\bX$ is a stationary point solution of problem \eqref{eq:reform ellipse} if and only if the corresponding $\bV$ is a stationary point solution of problem \eqref{eq:equality}.
\end{proposition}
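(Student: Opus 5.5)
The plan is to compare the first-order (KKT) conditions of the two problems through the linear map $\mathcal{T}:\bX\mapsto\bV$ defined by \eqref{eq:Vkc_transform}, i.e., $\bV^{(c)}=(\bH^{(c)})^\hh\bX^{(c)}$ for each $c$. Write $f(\bV)=-\sum_k\omega_kR_k$. Then the objective of \eqref{eq:reform ellipse} is $g(\bX)=f(\mathcal{T}(\bX))$, because $\bH_k\bV_j=\sum_c\bH_k^{(c)}(\bH^{(c)})^\hh\bX_j^{(c)}=\bG_k\bX_j$. Likewise $\|\bV^{(c)}\|_F^2=\tr((\bX^{(c)})^\hh\bQ^{(c)}\bX^{(c)})$, so $\mathcal{T}$ maps $\mathcal{M}$ into $\mathcal{S}$. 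Since $\bH^{(c)}$ has full row rank, $(\bH^{(c)})^\hh$ is injective and $\mathcal{T}$ is a bijection from $\mathcal{M}$ onto $\mathcal{S}\cap\{\bV^{(c)}\in\operatorname{Col}((\bH^{(c)})^\hh)\ \forall c\}$. By the chain rule, $\nabla_{\bX^{(c)}}g=\bH^{(c)}\nabla_{\bV^{(c)}}f$.

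Stationarity of $\bV$ for \eqref{eq:equality} means that for each $c$ there is a real $\lambda_c$ with $\nabla_{\bV^{(c)}}f=\lambda_c\bV^{(c)}$; the constraint gradients $\bV^{(c)}\neq\mathbf{0}$ act on disjoint blocks, so LICQ holds. On $\mathcal{M}$, which is a product of embedded submanifolds by Proposition~\ref{prop:ellipsoid}, stationarity of $\bX$ means $\nabla_{\bX^{(c)}}g=\mu_c\bQ^{(c)}\bX^{(c)}$ for some real $\mu_c$, i.e., the Riemannian gradient (the projection onto the tangent space) vanishes.

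For the direction ``$\bV$ stationary $\Rightarrow$ $\bX$ stationary'', the key step is the subspace property. From the stationarity equation I would show that $\bV^{(c)}\in\operatorname{Col}((\bH^{(c)})^\hh)$. To do this, compute $\nabla_{\bV^{(c)}}f$ explicitly from \eqref{eq:rate}; it has the form $(\bH^{(c)})^\hh\bB^{(c)}$, where each user-$k$ column block is $\sum_j(\bH_j^{(c)})^\hh(\cdot)$. Every term is thus left-multiplied by some $(\bH_j^{(c)})^\hh$, which is the same structure as the sum-power case in \cite{zhao2023rethinking}. By Proposition~\ref{prop:power tight}, $\lambda_c\neq0$: if $\lambda_c=0$, then $\bV^{(c)}=\mathbf{0}$ can be excluded because tightness forces $\|\bV^{(c)}\|_F^2=P_c>0$, though one must still rule out a zero gradient with nonzero $\bV^{(c)}$. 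This last point is the main obstacle. I would argue it as in the appendix proof of Proposition~\ref{prop:power tight}: taking the inner product of the stationarity equation with $\bV^{(c)}$ gives $\lambda_cP_c=\Re\tr((\bV^{(c)})^\hh\nabla_{\bV^{(c)}}f)$, and that appendix presumably shows this is strictly negative, i.e., scaling up cluster $c$ strictly helps. Granting $\lambda_c\neq0$, we get $\bV^{(c)}=\lambda_c^{-1}(\bH^{(c)})^\hh\bB^{(c)}$, so $\bX^{(c)}=\lambda_c^{-1}\bB^{(c)}$ exists. Moreover, $\nabla_{\bX^{(c)}}g=\bH^{(c)}\lambda_c\bV^{(c)}=\lambda_c\bQ^{(c)}\bX^{(c)}$, which is the stationarity condition on $\mathcal{M}$ with $\mu_c=\lambda_c$.

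For the converse, suppose $\bH^{(c)}\nabla_{\bV^{(c)}}f=\mu_c\bH^{(c)}(\bH^{(c)})^\hh\bX^{(c)}=\mu_c\bH^{(c)}\bV^{(c)}$. Then $\nabla_{\bV^{(c)}}f-\mu_c\bV^{(c)}\in\operatorname{null}(\bH^{(c)})$. Both terms lie in $\operatorname{Col}((\bH^{(c)})^\hh)$, the gradient by the structural form above and $\bV^{(c)}$ by construction. Since $\operatorname{Col}((\bH^{(c)})^\hh)\cap\operatorname{null}(\bH^{(c)})=\{\mathbf{0}\}$, the difference vanishes, so $\nabla_{\bV^{(c)}}f=\mu_c\bV^{(c)}$ and $\bV$ is stationary for \eqref{eq:equality}. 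The only nontrivial ingredients are therefore the range structure of $\nabla_{\bV^{(c)}}f$, which requires a careful gradient computation of the log-det rates, and the nonvanishing multiplier inherited from Proposition~\ref{prop:power tight}.
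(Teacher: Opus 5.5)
\textbf{Where your proof matches the paper.} Your overall route is the paper's. The gradient with respect to $\bV^{(c)}$ lies in $\operatorname{Col}((\bH^{(c)})^\hh)$; a nonzero multiplier then puts $\bV^{(c)}$ in that column space; and $R_k=\hat R_k$ under \eqref{eq:Vkc_transform}. Your chain rule $\nabla_{\bX^{(c)}}g=\bH^{(c)}\nabla_{\bV^{(c)}}f$ and your $\operatorname{Col}((\bH^{(c)})^\hh)\cap\operatorname{null}(\bH^{(c)})=\{\mathbf 0\}$ step are more explicit than the paper, which never spells out the converse. For a given pair $\bX,\bV$ linked by \eqref{eq:Vkc_transform}, these two steps already give both directions with no information on $\lambda_c$. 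The condition $\lambda_c\neq0$ matters only for the stronger claim that every stationary $\bV$ of \eqref{eq:equality} has a preimage $\bX$, which is what the paper's proof is really after.

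\textbf{The gap: your justification of $\lambda_c\neq0$ fails.} Proposition~\ref{prop:power tight} is about \eqref{eq:original}. For \eqref{eq:equality}, tightness is automatic and the multiplier is sign-free, so the proposition does not apply. The heuristic that scaling up cluster $c$ alone strictly helps is precisely the argument the paper says breaks down for $C\ge2$; Appendix~\ref{appendix:power_tight} instead perturbs one user block along $\operatorname{null}(\bH^{(c)}_{-k})$.

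\textbf{How to close it.} You need nonvanishing, not a sign, and it follows from the structure you already have. Write $\nabla_{\bV_i^{(c)}}f=-(\bH^{(c)})^\hh\bm{Y}_i$, where $\bm{Y}_i$ vertically stacks $\omega_k\bM_{k,i}\bH_k\bV_i$ over $k$ (with $\bM_{k,i},\bS_k$ as in Appendix~\ref{appendix:reform_equiv}). Since $(\bH^{(c)})^\hh$ has full column rank, $\lambda_c=0$ forces every $\bm{Y}_i=\mathbf 0$. The $k$th block of $\bm{Y}_k$ is $\omega_k\bS_k^{-1}\bH_k\bV_k$, so $\bH_k\bV_k=\mathbf 0$ for all $k$ and every rate is zero.

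\textbf{Zero-rate points must be excluded.} At these points the claim is genuinely false. If $L>KN_r$, take any $\bV$ whose blocks $\bV^{(c)}$ have columns in $\operatorname{null}(\bH^{(c)})$ and power $P_c$. It is stationary for \eqref{eq:equality}, since all gradients vanish, yet it has no preimage $\bX$. The paper has the same blind spot: $\bV=\mathbf 0$ is a KKT point of \eqref{eq:original} with all $\mu_c=0$. So the $\mu_c^*>0$ it imports from Appendix~\ref{appendix:power_tight} holds only at nontrivial stationary points.
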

\begin{IEEEproof}
  See Appendix \ref{appendix:reform_equiv}.
\end{IEEEproof}

We remark that the above theorem is foreshadowed by the result of \cite{zhao2023rethinking}, which aims to reduce the variable dimension for the WMMSE algorithm when $C=1$. Most importantly, because $N_r\ll L$, the dimension and complexity of the manifold optimization in \eqref{eq:reform ellipse} are much lower than those of the spherical manifold case in \eqref{eq:spheres}.

\section{Manifold Optimization Algorithm}\label{sec:mani_algo}

We propose using the conjugate gradient method \cite{AbsilOpt} to address the manifold optimization problem in \eqref{eq:reform ellipse}. This section consists of two parts: first, we sketch the overall procedure of the Riemannian conjugate gradient method on a Riemannian manifold, and review some terminologies it involves; second, we specify the method for our problem case.

\subsection{Preliminaries}
\label{sec:preliminaries}
Consider a general optimization problem on a Riemannian manifold $\mathcal{M}$:
\begin{equation}
\underset{x\in\mathcal M}{\text{minimize}} \quad f(x).\label{prob:general}
\end{equation}
To ease illustration, let us further assume that $\mathcal{M}$ is an embedded submanifold of the complex Euclidean space $\mathbb{C}^n$ defined by a smooth function $F: \mathbb{C}^n \to \mathbb{R}$ as
\begin{equation}
\mathcal{M} = \left\{x : F(x)=0 \right\}.
\end{equation}
The Riemannian conjugate gradient method updates $x$ iteratively on the manifold. Each iteration consists of two parts: first, update $x$ by the Riemannian gradient, so the new $x$ departs from $\mathcal M$; second, retract the new $x$ back onto $\mathcal M$. We detail it in the following.

\newcounter{MYtempeqncnt2}
\setcounter{MYtempeqncnt2}{\value{equation}}
\begin{figure*}[t]
\setcounter{equation}{33}
\begin{equation}
    \beta_{\ell} = \frac{\langle \mathrm{grad}\, f(x_{\ell-1}),\ \mathrm{grad}\, f(x_{\ell-1}) - \mathcal{T}(\mathrm{grad}\, f(x_{\ell-2}), \alpha_{\ell-2}) \rangle_{x_{\ell-1}}}{\langle \mathcal{T}(\eta_{\ell-1}, \alpha_{\ell-2}),\ \mathrm{grad}\, f(x_{\ell-1}) - \mathcal{T}(\mathrm{grad}\, f(x_{\ell-2}), \alpha_{\ell-2}) \rangle_{x_{\ell-1}}} \label{eq:beta_l}
\end{equation}
\hrulefill
\end{figure*}
\setcounter{equation}{\value{MYtempeqncnt2}}
\subsubsection{Remannian Gradient}

We start by introducing some concepts of the Riemannian geometry. The \emph{tangent space} to $\mathcal{M}$ at a point $x \in \mathcal{M}$ is the set of all tangent vectors of $F(x)$ at the point $x$:
\begin{equation}
    T_x\mathcal{M} = \left\{\xi \in \mathbb{C}^n : \mathrm{D}F(x)[\xi] = 0\right\}.
\end{equation}
Now, consider any two vectors $\xi$ and $\eta$ in the tangent space, namely \emph{tangent vectors}. The \emph{Riemannian metric} between them is computed as the inner product:
\begin{equation}
\langle \xi, \eta \rangle_x = \mathrm{Re}\left\{\xi^\hh \eta\right\}.
\end{equation}
The orthogonal complement of the tangent space $T_x\mathcal{M}$ is called the \emph{normal space}, spanned by the constraint gradient:
\begin{equation}
N_x\mathcal{M} = \left\{\nu \in \mathbb{C}^n : \nu = \lambda \nabla F(x), \lambda \in \mathbb{R}\right\}.
\end{equation}
Since $T_x\mathcal{M}$ and $N_x\mathcal{M}$ form a direct sum of $\mathcal C^n$, an arbitrary vector $z \in \mathbb{C}^n$ can be uniquely partitioned into a pair of vectors belonging to the two subspaces respectively, i.e., \emph{orthogonal projection}. Specifically, the projection of $z$ onto the tangent space $T_x\mathcal{M}$ is given by
\begin{equation}
\mathcal{P}_x(z) = z - \frac{\langle \nabla F(x), z \rangle_x}{\langle \nabla F(x), \nabla F(x) \rangle_x} \nabla F(x).
\end{equation}
The \emph{Riemannian gradient} is obtained by projecting the Euclidean gradient $\nabla f(x)$ in $\mathbb C^n$ onto the tangent space:
\begin{equation}
\mathrm{grad}\, f(x) = \mathcal{P}_x(\nabla f(x)). \label{eq:Riemannian_gradient}
\end{equation}
But the Riemannian conjugate gradient method modifies the gradient further. Denote this modified Riemannian gradient of the $\ell$th iteration by $\eta_\ell$. Thus, for the $\ell$th iteration, we update $x$ as
\begin{equation}
 \hat x_{\ell} =  x_{\ell-1} + \alpha_{\ell}\eta_{\ell}, \label{eq:update_xhat}
\end{equation}
where $\alpha_{\ell}>0$ is the update stepsize which can be determined by the backtracking line search with the Armijo condition \cite{Armijo1966}.

\begin{figure}[t]
    \centering
    \subfloat[Illustration of the vector transport $\mathcal{T}(\eta_{\ell-1},\alpha_{\ell-1})$ and the search direction $\eta_\ell$. The vector transport moves $\eta_{\ell-1}$ from $T_{x_{\ell-2}}\mathcal{M}$ to $T_{x_{\ell-1}}\mathcal{M}$, and the search direction $\eta_\ell$ is a combination of the negative Riemannian gradient and the transported vector.\label{fig:transport}]{%
        \includegraphics[width=0.9\columnwidth]{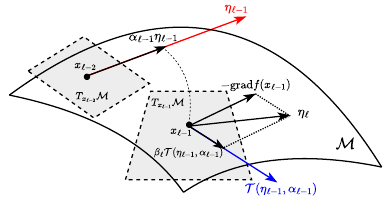}
    }\\
    \subfloat[Illustration of the update of $x_{\ell}$ in the Riemannian conjugate gradient method. The new point $\hat x_{\ell}$ is obtained by moving from $x_{\ell-1}$ along the search direction $\eta_\ell$ with stepsize $\alpha_\ell$, and then retracted back to the manifold to get $x_\ell$.\label{fig:updatex}]{%
        \includegraphics[width=0.9\columnwidth]{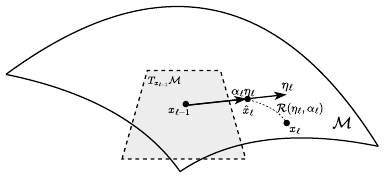}
    }
    \caption{Illustrations of the Riemannian conjugate gradient method.}
    \label{fig:combined}
\end{figure}

\newcounter{MYtempeqncnt}
\setcounter{MYtempeqncnt}{\value{equation}}
\begin{figure*}[t]
\setcounter{equation}{38}
\vspace*{0pt}
\begin{equation}
\begin{aligned}
\mathcal{P}_x(\xi) &=
\begin{bmatrix}
\mathcal{P}_{x^{(1)}}(\xi^{(1)}) \\
\vdots \\
\mathcal{P}_{x^{(n)}}(\xi^{(n)})
\end{bmatrix},
\quad
\mathcal{R}(\eta, \alpha_{\ell}) =
\begin{bmatrix}
\mathcal{R}(\eta^{(1)}, \alpha_{\ell}) \\
\vdots \\
\mathcal{R}(\eta^{(n)}, \alpha_{\ell})
\end{bmatrix},
\quad
\mathcal{T}(\eta, \alpha_{\ell}) =
\begin{bmatrix}
\mathcal{T}(\eta^{(1)}, \alpha_{\ell}) \\
\vdots \\
\mathcal{T}(\eta^{(n)}, \alpha_{\ell})
\end{bmatrix}
\end{aligned}
\label{eq:product_opes}
\end{equation}
\hrulefill
\vspace*{0pt}
\setcounter{equation}{45}
\begin{equation}
{\bX}_{\ell}^{(c)}=\mathcal{R}(\boldsymbol{\eta}_{\ell}^{(c)}, \alpha_{\ell}) = \sqrt{{P_c}\bigg/{\operatorname{tr}\left(\bQ^{(c)}\left({\bX}_{\ell-1}^{(c)}+\alpha_{\ell}\boldsymbol{\eta}_{\ell}^{(c)}\right)\left({\bX}_{\ell-1}^{(c)}+\alpha_{\ell}\boldsymbol{\eta}_{\ell}^{(c)}\right)^\hh\right)}}\left({\bX}_{\ell-1}^{(c)}+\alpha_{\ell}\boldsymbol{\eta}_{\ell}^{(c)}\right)\label{eq:retra}
\end{equation}
\vspace*{0pt}
\hrulefill
\setcounter{equation}{47}
\begin{equation}
\lambda' = \frac{\operatorname{tr}\left(\bQ^{(c)}\left(\boldsymbol{\xi}^{(c)}({\bX}^{(c)}+\boldsymbol{\eta}^{(c)})^\hh+({\bX}^{(c)}+\boldsymbol{\eta}^{(c)})(\boldsymbol{\xi}^{(c)})^\hh\right)\right)}{2\operatorname{tr}\left(\bQ^{(c)}\bQ^{(c)}({\bX}^{(c)}+\boldsymbol{\eta}^{(c)})({\bX}^{(c)}+\boldsymbol{\eta}^{(c)})^\hh\right)}\label{eq:lambda_prime}
\end{equation}
\vspace*{0pt}
\hrulefill
\setcounter{equation}{48}
\begin{equation}
\begin{aligned}
\mathcal{P}_{{\bX}}(\bZ) &=
\begin{bmatrix}
\mathcal{P}_{{\bX}^{(1)}}(\bZ^{(1)}) \\
\vdots \\
\mathcal{P}_{{\bX}^{(C)}}(\bZ^{(C)})
\end{bmatrix},
\quad
\mathcal{R}(\boldsymbol{\eta}, \alpha_{\ell})\! &=
\begin{bmatrix}
\mathcal{R}(\boldsymbol{\eta}^{(1)}, \alpha_{\ell}) \\
\vdots \\
\mathcal{R}(\boldsymbol{\eta}^{(C)}, \alpha_{\ell})
\end{bmatrix},
\quad
\mathcal{T}(\boldsymbol{\eta}, \alpha_{\ell}) &=
\begin{bmatrix}
\mathcal{T}(\boldsymbol{\eta}^{(1)}, \alpha_{\ell}) \\
\vdots \\
\mathcal{T}(\boldsymbol{\eta}^{(C)}, \alpha_{\ell})
\end{bmatrix}
\end{aligned}
\label{eq:product_operations}
\end{equation}
\hrulefill
\vspace*{0pt}
\end{figure*}
\setcounter{equation}{\value{MYtempeqncnt}}

\subsubsection{Retraction} However, $ \hat x_{\ell}$ is not guaranteed to be on the manifold. To make the iterative update feasible, we need to retract it back onto $\mathcal M$:
\begin{equation}
x_{\ell}=\mathcal{R}(\eta_{\ell},\alpha_\ell): = \arg\min_{y \in \mathcal{M}} \|y - \hat x_{\ell}\|_F. \label{eq:retraction}
\end{equation}

Note that $\alpha_\ell$ is an argument of the retraction $\mathcal R(\cdot)$ because $\hat x_\ell$ depends on it.
It remains to specify how the Riemannian conjugate gradient method changes the original gradient $\mathrm{grad}\, f(x)$ to $\eta_\ell$; this change involves the retraction operation and a new notion called vector transport. The idea of \emph{vector transport} is to move $\eta_{\ell-1}$ of the previous iteration from $T_{x_{\ell-2}}\mathcal{M}$ to $T_{x_{\ell-1}}\mathcal{M}$. Following \cite{AbsilOpt}, it is computed as
\begin{equation}
\mathcal{T}(\eta_{\ell-1},\alpha_{\ell-1}) =  \mathcal{P}_{x_{\ell-1}}(\eta_{\ell-1})=\mathcal{P}_{\mathcal{R}(\eta_{\ell-1},\alpha_{\ell-1})}(\eta_{\ell-1}). \label{eq:x_transport}
\end{equation}
We finally obtain $\eta_l$ as
\begin{equation}
    \eta_{\ell} = -\mathrm{grad}\, f(x_{\ell-1}) + \beta_{\ell} \mathcal{T}(\eta_{\ell-1},\alpha_{\ell-1}), \label{eq:eta_l}
\end{equation}
where the conjugate coefficient $\beta_{\ell}$ is determined by the Hestenes-Stiefel formula \cite{Salleh2016HSbeta} given in \eqref{eq:beta_l} at the top of the page. \addtocounter{equation}{1}
As mentioned in \eqref{eq:update_xhat}, the stepsize $\alpha_{\ell}$ is determined via backtracking line search with the Armijo condition:
\begin{multline}
 f(\mathcal{R}(\eta_{\ell}, \sigma^m \alpha^0 )) \leq f(x_{\ell-1}) \\+ p\sigma^m \alpha^0 \langle \mathrm{grad}\, f(x_{\ell-1}), \eta_{\ell} \rangle_{x_{\ell-1}}, \label{eq:armijo_l}
\end{multline}
where the initial stepsize $\alpha^0>0$ is iteratively reduced by a factor $\sigma\in(0,1)$ until the condition is satisfied, and $p\in(0,1)$ is a constant that controls the sufficient decrease. The stepsize is then determined as $\alpha_{\ell} = \sigma^m \alpha^0$. Fig. \ref{fig:combined} illustrates the above operations in the Riemannian conjugate gradient method. 


\subsubsection{Product Manifold}

The above algorithm can be further extended for the product manifold case. Assume now that the manifold becomes:
\begin{equation}
\mathcal{M} = \mathcal{M}_1 \times \ldots \times \mathcal{M}_n,
\end{equation}
where each $\mathcal M_i$ is a Riemannian manifold. The above terminologies can be readily carried over to the product manifold case. For a point $x = (x^{(1)}, \ldots, x^{(n)}) \in \mathcal{M}$, the tangent space is now given by
\begin{equation}
T_x\mathcal{M} = T_{x^{(1)}}\mathcal{M}_1 \oplus T_{x^{(2)}}\mathcal{M}_2 \oplus \cdots \oplus T_{x^{(n)}}\mathcal{M}_n.
\end{equation}
The Riemannian metric then amounts to the sum of the Riemannian  metrics across the submanifolds:
\begin{equation}
\langle \xi, \eta \rangle_x = \sum_{i=1}^n \langle \xi^{(i)}, \eta^{(i)} \rangle_{x^{(i)}},
\end{equation}
where $\xi = (\xi^{(1)}, \ldots, \xi^{(n)})$ and $\eta = (\eta^{(1)}, \ldots, \eta^{(n)})$ are tangent vectors at $x$. Similarly, the orthogonal projection, retraction, and transport on the product manifold are computed component-wise as in \eqref{eq:product_opes}.
\addtocounter{equation}{1}

\subsection{Reduced Riemannian Conjugate Gradient Method}

We now apply the Riemannian conjugate gradient method to problem \eqref{eq:reform ellipse}. Following the framework in Subsection~\ref{sec:preliminaries}, we specify the geometric elements and gradient computation for the ellipsoidal product manifold $\mathcal{M} = \mathcal{M}_1 \times \cdots \times \mathcal{M}_C$.

\subsubsection{Geometric Elements}

For each component manifold $\mathcal{M}_c$, the tangent space at ${\bX}^{(c)} \in \mathcal{M}_c$ is
\begin{equation}
T_{{\bX}^{(c)}}\mathcal{M}_c =\left\{\boldsymbol{\xi}^{(c)}:\Re\left\{\operatorname{tr}\left(\bQ^{(c)}\left(\boldsymbol{\xi}^{(c)}({\bX}^{(c)})^\hh\right)\right)\right\}=0\right\},
\end{equation}
and the normal space is
\begin{equation}
N_{{\bX}^{(c)}}\mathcal{M}_c= \left\{\boldsymbol{\eta}^{(c)}: \boldsymbol{\eta}^{(c)} = \lambda \bQ^{(c)}{\bX}^{(c)}, \lambda \in \mathbb{R}\right\}.
\end{equation}
For any two tangent vectors $\boldsymbol{\xi}^{(c)}, \boldsymbol{\eta}^{(c)} \in T_{{\bX}^{(c)}}\mathcal{M}_c$, the Riemannian metric is 
\begin{equation}
\left\langle\boldsymbol{\xi}^{(c)}, \boldsymbol{\eta}^{(c)}\right\rangle _{{\bX}^{(c)}} = \Re\left\{\operatorname{tr}\left((\boldsymbol{\xi}^{(c)})^\hh\boldsymbol{\eta}^{(c)}\right)\right\}. 
\end{equation}
The orthogonal projection of an arbitrary vector $\bZ^{(c)} \in \mathbb{C}^{N_r \times Kd_k}$ onto the tangent space $T_{{\bX}^{(c)}}\mathcal{M}_c$ is given by
\begin{equation}
\mathcal{P}_{{\bX}^{(c)}}(\bZ^{(c)}) = \bZ^{(c)} - \lambda\bQ^{(c)}{\bX}^{(c)},
\end{equation}
where $\lambda$ is chosen such that $\mathcal{P}_{{\bX}^{(c)}}(\bZ^{(c)}) \in T_{{\bX}^{(c)}}\mathcal{M}_c$, i.e.,
\begin{equation}
\lambda = \frac{\operatorname{tr}\left(\bQ^{(c)}\left(\bZ^{(c)}({\bX}^{(c)})^\hh+{\bX}^{(c)}(\bZ^{(c)})^\hh\right)\right)}{2\operatorname{tr}\left(\bQ^{(c)}\bQ^{(c)}{\bX}^{(c)}({\bX}^{(c)})^\hh\right)}.
\end{equation}
Given a search direction $\boldsymbol{\eta}_{\ell}^{(c)} \in T_{{\bX}^{(c)}}\mathcal{M}_c$ and stepsize $\alpha_{\ell} > 0$ in ${\ell}$th iteration, we update ${\bX}_{\ell}^{(c)}$ as
\begin{equation}
 \hat{\bX}_{\ell}^{(c)} =  {\bX}_{\ell-1}^{(c)} + \alpha_{\ell}\boldsymbol{\eta}_{\ell}^{(c)}.
\end{equation}
We then retract $\hat{\bX}_{\ell}^{(c)}$ back onto $\mathcal{M}_c$ using the retraction defined in \eqref{eq:retra} at the top of the page.\addtocounter{equation}{1}

To transport $\boldsymbol{\eta}_{\ell-1}^{(c)}$ from $T_{{\bX}_{\ell-2}^{(c)}}\mathcal{M}_c$ to $T_{{\bX}_{\ell-1}^{(c)}}\mathcal{M}_c$, we compute the transport as
\begin{equation}
\begin{aligned}
\mathcal{T}(\boldsymbol{\eta}_{\ell-1}^{(c)}, \alpha_{\ell-1}) &= \mathcal{P}_{\mathcal{R}(\boldsymbol{\eta}_{\ell-1}^{(c)}, \alpha_{\ell-1})}\left(\boldsymbol{\eta}_{\ell-1}^{(c)}\right)\\
&= \boldsymbol{\eta}_{\ell-1}^{(c)} - \lambda'\bQ^{(c)}\left({\bX}_{\ell-1}^{(c)}+\boldsymbol{\eta}_{\ell-1}^{(c)} \right),
\end{aligned}
\end{equation}
where $\lambda'$ is given in \eqref{eq:lambda_prime} at the top of the page. For the product manifold $\mathcal{M}$, the projection, retraction, and transport are computed component-wise in \eqref{eq:product_operations}. \addtocounter{equation}{2}

\newcounter{MYtempeqncnt1}
\setcounter{MYtempeqncnt1}{\value{equation}}
\begin{figure*}[!t]
\setcounter{equation}{49}
\begin{align}
\nabla f(\bX_k) &= -\omega_k \nabla_{\bX_k} \hat{R}_k - \sum_{i=1, i\neq k}^K \omega_i \nabla_{\bX_k} \hat{R}_i\notag\\
&= -\omega_k \bG_k^\hh \left(\sigma_k^2\bI +\sum_{j=1}^K \bG_{k}\bX_j\bX_j^\hh\bG_{k}^\hh\right)^{-1}\bG_{k}\bX_{k}\notag\\
&\qquad - \sum_{i=1,i\neq k}^K \omega_i \bG_i^\hh \left[\left(\sigma_i^2\bI +\sum_{j=1}^K \bG_{i}\bX_j\bX_j^\hh\bG_{i}^\hh\right)^{-1}- \left(\sigma_i^2\bI +\sum_{j=1, j\neq i}^K \bG_{i}\bX_j\bX_j^\hh\bG_{i}^\hh\right)^{-1}\right] \bG_{i}\bX_{k}
\label{eq:euclidean_grad}
\end{align}
\hrulefill
\vspace*{1pt}
\setcounter{equation}{53}
\begin{equation}
\beta_{\ell} = \frac{\left\langle \mathrm{grad}\, f({\bX}_{\ell-1}),\ \mathrm{grad}\, f({\bX}_{\ell-1}) - \mathcal{T}(\mathrm{grad}\, f({\bX}_{\ell-2}), \alpha_{\ell-2}) \right\rangle_{{\bX}_{\ell-1}}}
{\left\langle\mathcal{T}(\boldsymbol{\eta}_{\ell-1}, \alpha_{\ell-2}),\ \mathrm{grad}\, f({\bX}_{\ell-1}) - \mathcal{T}(\mathrm{grad}\, f({\bX}_{\ell-2}), \alpha_{\ell-2}) \right\rangle_{{\bX}_{\ell-1}}}\label{eq:betaHS}
\end{equation}
\hrulefill
\end{figure*}
\setcounter{equation}{\value{MYtempeqncnt1}}

\subsubsection{Riemannian Gradient}

Define the objective function in \eqref{eq:reform ellipse} as $f({\bX}) := -\sum_{k=1}^K \omega_k \hat{R}_k$. Recall the tuple ${\bX} = \left[\bX_1, \ldots, \bX_K\right]$, we first compute the Euclidean gradient of $f({\bX})$ with respect to each variable $\bX_k$. The Euclidean gradient is $\nabla f({\bX}) = \left[\nabla f(\bX_1), \ldots, \nabla f(\bX_K)\right]$, where $\nabla f(\bX_k)$ is given in \eqref{eq:euclidean_grad} at the top of next page. \addtocounter{equation}{1} Note that $\nabla f(\bX_k)$ is vertically stacked as
\begin{equation}
\nabla f(\bX_k) = \left[\nabla f(\bX_k^{(1)})^\top, \nabla f(\bX_k^{(2)})^\top, \ldots, \nabla f(\bX_k^{(C)})^\top\right]^\top.
\end{equation}
The Riemannian gradient is then obtained by projection:
\begin{equation}
\mathrm{grad}\, f({\bX}) = \mathcal{P}_{{\bX}}(\nabla f({\bX})).\label{eq:rie_grad}
\end{equation}

\subsubsection{Conjugate Search Direction and Line Search}

We construct the search direction $\boldsymbol{\eta}_{\ell}$ as
\begin{equation}
\boldsymbol{\eta}_{\ell} = -\mathrm{grad}\, f({\bX}_{\ell-1}) + \beta_{\ell} \,\mathcal{T}\!\left(\boldsymbol{\eta}_{\ell-1}, \alpha_{\ell-1}\right), \label{eq:search_direction}
\end{equation}
where $\beta_{\ell}$ is the Hestenes-Stiefel parameter \cite{Salleh2016HSbeta} given in \eqref{eq:betaHS} at the top of next page. To ensure descent, we set
\addtocounter{equation}{1}
\begin{equation}
\beta_{\ell}=
\begin{cases}
\max\{\beta_\ell^{HS}, 0\}, & \text{if } \left\langle \mathrm{grad}\, f({\bX}_{\ell-1}), \boldsymbol{\eta}_{\ell}\right\rangle_{{\bX}_{\ell-1}} > 0,\\
0, & \text{otherwise}.
\end{cases}
\label{eq:beta_modified}
\end{equation}
The stepsize $\alpha_{\ell}$ is determined by Armijo backtracking: find the smallest non-negative integer $m$ such that
\begin{multline}
f(\mathcal{R}(\boldsymbol{\eta}_{\ell}, \sigma^m \alpha^0)) \leq f({\bX}_{\ell-1})\\ + \sigma^m p\ \alpha^0 \left\langle \mathrm{grad}\, f({\bX}_{\ell-1}), \boldsymbol{\eta}_{\ell} \right\rangle_{{\bX}_{\ell-1}}, \label{eq:armijo}
\end{multline}
and set $\alpha_{\ell} = \sigma^m \alpha^0$.
The complete algorithm is summarized in Algorithm~\ref{alg:RRCG}.
\begin{algorithm}[t]
  \caption{Proposed Distributed Antenna Beamforming}
  \label{alg:RRCG}
  \begin{algorithmic}[1]
      \STATE Initialize ${\bX}_0\in{\mathcal{M}}$, set $\ell=1$, and choose parameters $\sigma, p \in (0,1)$, $\alpha^0>0$.
      \STATE Set the initial search direction $\boldsymbol{\eta}_0 = -\mathrm{grad}\, f({\bX}_0)$.
      \REPEAT
        \STATE Compute the Riemannian gradient $\mathrm{grad}\, f({\bX}_{\ell-1})$ using \eqref{eq:rie_grad}.
        \IF {$\ell=1$}
          \STATE Set the search direction $\boldsymbol{\eta}_1 = -\mathrm{grad}\, f({\bX}_{0})$.
        \ELSE
          \STATE Compute the vector transport $\mathcal{T}(\boldsymbol{\eta}_{\ell-1}, \alpha_{\ell-1})$.
          \STATE Compute $\beta_{\ell}$ using \eqref{eq:beta_modified}.
          \STATE Update the search direction $\boldsymbol{\eta}_\ell$ using \eqref{eq:search_direction}.
        \ENDIF 
        \STATE Find stepsize $\alpha_\ell$ via backtracking line search with the Armijo condition \eqref{eq:armijo}.
        \STATE Retract onto the manifold using \eqref{eq:product_operations} and obtain ${\bX}_\ell$.
        \STATE Increment $\ell \leftarrow \ell + 1$.
      \UNTIL{the value of $f({\bX}_\ell)$ converges}
  \end{algorithmic}
\end{algorithm}

\subsubsection{Convergence Analysis}

We now establish the convergence of the proposed algorithm to a stationary point.

\begin{proposition}
\label{prop:convergence}
The sequence $\{{\bX}_\ell\}_{\ell=0}^\infty$ generated by Algorithm~\ref{alg:RRCG} converges to a stationary point of $f$ on $\mathcal{M}$, i.e.,
\begin{equation}
\lim_{\ell \to \infty} \left\|\mathrm{grad}\, f({\bX}_\ell)\right\|_{F} = 0.
\end{equation}
\end{proposition}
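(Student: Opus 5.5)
I would prove Proposition~\ref{prop:convergence} with the standard Zoutendijk-type argument for Riemannian descent methods with Armijo backtracking, as in \cite{AbsilOpt}. The argument rests on three ingredients. First, $\mathcal{M}$ is compact. Second, $f$ is smooth with a Lipschitz-type bound along retraction curves. Third, every search direction $\boldsymbol{\eta}_\ell$ is gradient-related.

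\emph{Compactness.} For compactness, use the full-row-rank assumption on $\bH^{(c)}$. It makes $\bQ^{(c)}$ positive definite, so each $\mathcal{M}_c$ is a closed, bounded ellipsoid. By Proposition~\ref{prop:ellipsoid} it is also a smooth embedded submanifold, and hence $\mathcal{M}$ is compact. Since $\sigma_k^2>0$, the matrices inside every $\log\det$ and inverse in \eqref{eq:rate_reform} are uniformly positive definite. Therefore $f$ is $C^\infty$ on a neighborhood of $\mathcal{M}$ and is bounded below on $\mathcal{M}$.

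\emph{Descent property.} The retraction \eqref{eq:retra} is a radial rescaling. It is smooth near $\mathcal{M}$ and satisfies $\mathcal{R}(\boldsymbol{\eta},0)=\bX$ and $\frac{d}{d\alpha}\mathcal{R}(\boldsymbol{\eta},\alpha)|_{\alpha=0}=\boldsymbol{\eta}$ for tangent $\boldsymbol{\eta}$, so it is a genuine first-order retraction. I would then examine the safeguard \eqref{eq:beta_modified}. Its intent is to use $\beta_\ell=0$, i.e.\ steepest descent, whenever the combined direction fails to be a descent direction, so that $\langle \mathrm{grad} f(\bX_{\ell-1}),\boldsymbol{\eta}_\ell\rangle<0$ always holds. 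For the convergence theory I would strengthen this into a gradient-relatedness condition: there exist constants $c_1,c_2>0$ with
\begin{equation}
\langle \mathrm{grad} f(\bX_{\ell-1}),\boldsymbol{\eta}_\ell\rangle \le -c_1\|\mathrm{grad} f(\bX_{\ell-1})\|_F^2, \qquad \|\boldsymbol{\eta}_\ell\|_F\le c_2\|\mathrm{grad} f(\bX_{\ell-1})\|_F .
\end{equation}
I would impose this by interpreting the restart rule as reverting to $-\mathrm{grad} f$ whenever either inequality fails. This step is the main obstacle. The Hestenes--Stiefel coefficient can blow up when its denominator is small, and the sign test as written only guarantees a nonpositive inner product, not a sufficient one. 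I would first try to argue that the restart keeps $\beta_\ell\|\mathcal{T}(\boldsymbol{\eta}_{\ell-1})\|$ proportional to $\|\mathrm{grad} f\|$. If that fails, I would state the standard safeguarded version explicitly as part of the algorithm.

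\emph{Uniform step size and conclusion.} With these conditions in place, compactness of $\mathcal{M}$ and smoothness of $f\circ\mathcal{R}$ give a constant $L_f$ such that
\begin{equation}
f(\mathcal{R}(\boldsymbol{\eta},\alpha))\le f(\bX)+\alpha\langle\mathrm{grad} f(\bX),\boldsymbol{\eta}\rangle+\tfrac{L_f}{2}\alpha^2\|\boldsymbol{\eta}\|_F^2 .
\end{equation}
It follows that the Armijo condition \eqref{eq:armijo} holds for all $\alpha\le 2(1-p)c_1/(L_f c_2^2)$. Backtracking therefore terminates with $\alpha_\ell\ge \bar\alpha>0$ independent of $\ell$. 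Each iteration then decreases $f$ by at least $p\bar\alpha c_1\|\mathrm{grad} f(\bX_{\ell-1})\|_F^2$. Telescoping and using that $f$ is bounded below on $\mathcal{M}$ gives $\sum_\ell\|\mathrm{grad} f(\bX_\ell)\|_F^2<\infty$, and hence $\|\mathrm{grad} f(\bX_\ell)\|_F\to0$. Finally, by continuity of $\mathrm{grad} f$ and compactness of $\mathcal{M}$, every accumulation point of $\{\bX_\ell\}$ is a stationary point, and by Proposition~\ref{prop:reform_equiv} it corresponds to a stationary point of \eqref{eq:equality}.
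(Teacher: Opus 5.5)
Your route is essentially the paper's: descent directions, Armijo backtracking and compactness of $\mathcal{M}$. The difference is that you carry out the Zoutendijk-type estimate yourself instead of citing \cite[Theorem~4.3.1, Corollary~4.3.2]{AbsilOpt}. That part is sound. For tangent $\boldsymbol{\eta}^{(c)}$ you have $\tr\bigl(\bQ^{(c)}(\bX^{(c)}+\alpha\boldsymbol{\eta}^{(c)})(\bX^{(c)}+\alpha\boldsymbol{\eta}^{(c)})^\hh\bigr)=P_c+\alpha^2\tr\bigl(\bQ^{(c)}\boldsymbol{\eta}^{(c)}(\boldsymbol{\eta}^{(c)})^\hh\bigr)\ge P_c$, so the retraction \eqref{eq:retra} is globally defined. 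Once $\|\boldsymbol{\eta}_\ell\|_F\le c_2\|\mathrm{grad}\, f(\bX_{\ell-1})\|_F$ and $\alpha\le\alpha^0$, the steps $\alpha\boldsymbol{\eta}_\ell$ stay in a compact part of the tangent bundle. That justifies your uniform $L_f$, hence the uniform step $\bar\alpha$ and the telescoping conclusion.

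\textbf{The gap.} The genuine gap is the one you flag yourself, and the proposal does not close it: nothing in Algorithm~\ref{alg:RRCG} as stated delivers your two inequalities. The rule \eqref{eq:beta_modified} is self-referential as printed and must be read as ``restart when $\boldsymbol{\eta}_\ell$ is not a descent direction''; read that way it yields only $\langle\mathrm{grad}\, f(\bX_{\ell-1}),\boldsymbol{\eta}_\ell\rangle<0$.
\begin{itemize}
\item It does not control the angle between $\boldsymbol{\eta}_\ell$ and $-\mathrm{grad}\, f$.
\item It does not control $\beta_\ell\|\mathcal{T}(\boldsymbol{\eta}_{\ell-1},\alpha_{\ell-1})\|_F$, whose Hestenes--Stiefel denominator can be arbitrarily small.
\item Without such control, no uniform Armijo step or uniform decrease follows.
\end{itemize}
The known global-convergence results for HS$^{+}$-type methods assume a sufficient-descent condition together with a Wolfe-type line search, not plain Armijo backtracking. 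So your ``first try'' has nothing to rest on. Your fallback proves the proposition for a modified, safeguarded algorithm, not for Algorithm~\ref{alg:RRCG}.

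\textbf{The paper has the same hole.} Its proof establishes only $\langle\mathrm{grad}\, f(\bX_{\ell-1}),\boldsymbol{\eta}_\ell\rangle\le 0$ and then invokes \cite[Theorem~4.3.1]{AbsilOpt}. The standing hypothesis of that theorem is that $\{\boldsymbol{\eta}_\ell\}$ be gradient-related: bounded, with $\limsup\langle\mathrm{grad}\, f(\bX_{\ell-1}),\boldsymbol{\eta}_\ell\rangle<0$ along subsequences converging to noncritical points. The paper never verifies this. With ``$\le 0$'' alone, the backtracking need not even terminate when the inner product vanishes. With your safeguard written into the algorithm, your argument is a complete proof and more careful than the paper's; without it, neither argument establishes the statement.
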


\begin{IEEEproof}
We first show that the search direction $\boldsymbol{\eta}_\ell$ is a descent direction. By construction in \eqref{eq:search_direction}, when $\ell = 1$, we have $\boldsymbol{\eta}_1 = -\mathrm{grad}\, f({\bX}_0)$, so
\begin{equation}
\left\langle \mathrm{grad}\, f({\bX}_0), \boldsymbol{\eta}_1 \right\rangle_{{\bX}_0} = -\left\|\mathrm{grad}\, f({\bX}_0)\right\|^2_{{\bX}_0} < 0.
\end{equation}
For $\ell \geq 2$, the modification of $\beta_\ell$ in \eqref{eq:beta_modified} ensures that $\left\langle \mathrm{grad}\, f({\bX}_{\ell-1}), \boldsymbol{\eta}_\ell \right\rangle_{{\bX}_{\ell-1}} \leq 0$. Thus, $\boldsymbol{\eta}_\ell$ is a descent direction for all $\ell$.

Since the search direction is a descent direction and the Armijo line search \eqref{eq:armijo} is employed, by \cite[Theorem~4.3.1]{AbsilOpt}, every accumulation point of the sequence $\{{\bX}_\ell\}$ is a stationary point of $f$ on $\mathcal{M}$.

From \cite[Theorem 1.4.8]{Conway2014PointSet}, a subset of a finite dimensional Euclidean space is compact if and only if it is closed and bounded. Since each component manifold $\mathcal{M}_c$ is defined by the constraint $\operatorname{tr}(\bQ^{(c)}{\bX}^{(c)}({\bX}^{(c)})^\hh) = P_c$, it is closed and bounded according to \cite[Definition 1.1.6]{Conway2014PointSet} and \cite[Definition 1.2.15]{Conway2014PointSet}. Therefore, the component manifold $\mathcal{M}_c$ and the product manifold $\mathcal{M}$ is compact. By \cite[Corollary~4.3.2]{AbsilOpt}, the compactness of $\mathcal{M}$ implies that $\lim_{\ell \to \infty} \|\mathrm{grad}\, f({\bX}_\ell)\|_{F} = 0$.
\end{IEEEproof}

\section{Numerical Results}
\label{sec:nume_result}
\subsection{Simulation Setup}
In this section, we present simulation results to evaluate the performance of proposed algorithm.
In the single-cell downlink system with a hexagonal cell of radius $400$m, the users are randomly placed in the cell, as illustrated in Fig. \ref{fig:network_topology}. The $C$ BS clusters are positioned equidistant from the center and uniformly spaced at $360^\circ/C$ intervals around the circle. The BS is equipped with $N_t$ transmit antennas and $K$ users are each equipped with $N_r=4$ receive antennas to receive $d_k=4$ data streams. The number of antennas in each cluster is set to $L=128$. To demonstrate the effectiveness of the proposed algorithm, the parameters $N_t$ and $K$ are chosen to satisfy $N_t\geq CKN_r$. The downlink distance-dependent path-loss is simulated by $128.1+37.6 \log _{10} d+\psi$ (in dB), where $d$ is the BS-to-user distance in km, and $\psi$ is the zero-mean Gaussian random variable with $8$dB standard deviation for the shadowing effect. In addition, the Rayleigh fading model is adopted to characterize small-scale fading. The background noise level $\sigma_k^2$ equals $-80$ dBm for all users. The weights $\omega_k$ of all users are set to $1$. Our proposed algorithm uses Armijo backtracking with parameters $\sigma =0.6, p=0.1, t^{0}=1\times 10^{10}$. The results are averaged over 1000 independent realizations.

We benchmark our proposed algorithm against the following algorithms. 
\begin{itemize}
    \item \textbf{WMMSE} \cite{shi2011WMMSE}: The weighted minimum mean square error (WMMSE) method reformulates the WSR maximization as a tractable equivalent problem by introducing auxiliary variables, and then solves it via block coordinate descent over auxiliary and precoding variables iteratively. Given that the constraint \eqref{eq:pgpc} is convex, the optimization over the precoding variable with auxiliary variables fixed remains a convex problem, which can be solved by CVX, a package for solving convex programs \cite{cvx}. Therefore, the WMMSE algorithm can be generalized to per-cluster power constraints case.
    \item \textbf{RWMMSE} \cite{zhao2023rethinking}: RWMMSE is the reduced version of WMMSE in two aspects, in which the dimension of the precoding variable is reduced by the low-dimensional subspace property \cite[Proposition 2]{zhao2023rethinking} and the sum-power constraint is eliminated by the full power property \cite[Proposition 3]{zhao2023rethinking}. However, the elimination of power constraints does not generalize to per-cluster power constraints case directly. In our simulations, we generalize RWMMSE to per-cluster power constraints case by first transforming the problem with Low-Dimensional Subspace Property, and then recovering ${\bV}$ from ${\bX}$ following the per-cluster power constraints.
    \item \textbf{Riemannian conjugate gradient} \cite{sun2024RCG}: The WSR maximization is reformulated on the product manifold \eqref{eq:equality} and solved by the Riemannian conjugate gradient algorithm.
    \item \textbf{EZF} \cite{sun2010ezf}: Eigen zero-forcing (EZF) beamforming is a robust, non-iterative linear precoding strategy in MU-MIMO downlink systems. By performing singular value decomposition (SVD) on individual user channels to align transmissions with dominant eigenmodes, EZF achieves near-optimal spectral efficiency with reduced multi-user interference.
\end{itemize}

\begin{figure}[t]
\centering
\subfloat[$C=4$ clusters]{\includegraphics[width=0.5\linewidth]{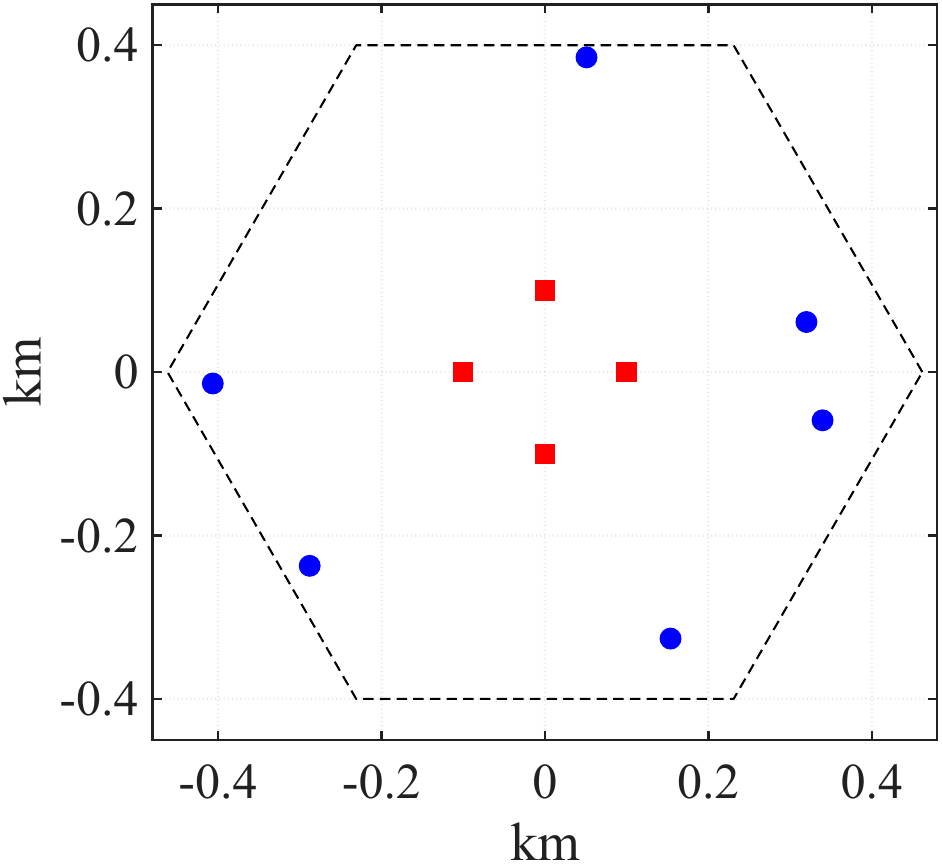}}\hfill
\subfloat[$C=8$ clusters]{\includegraphics[width=0.5\linewidth]{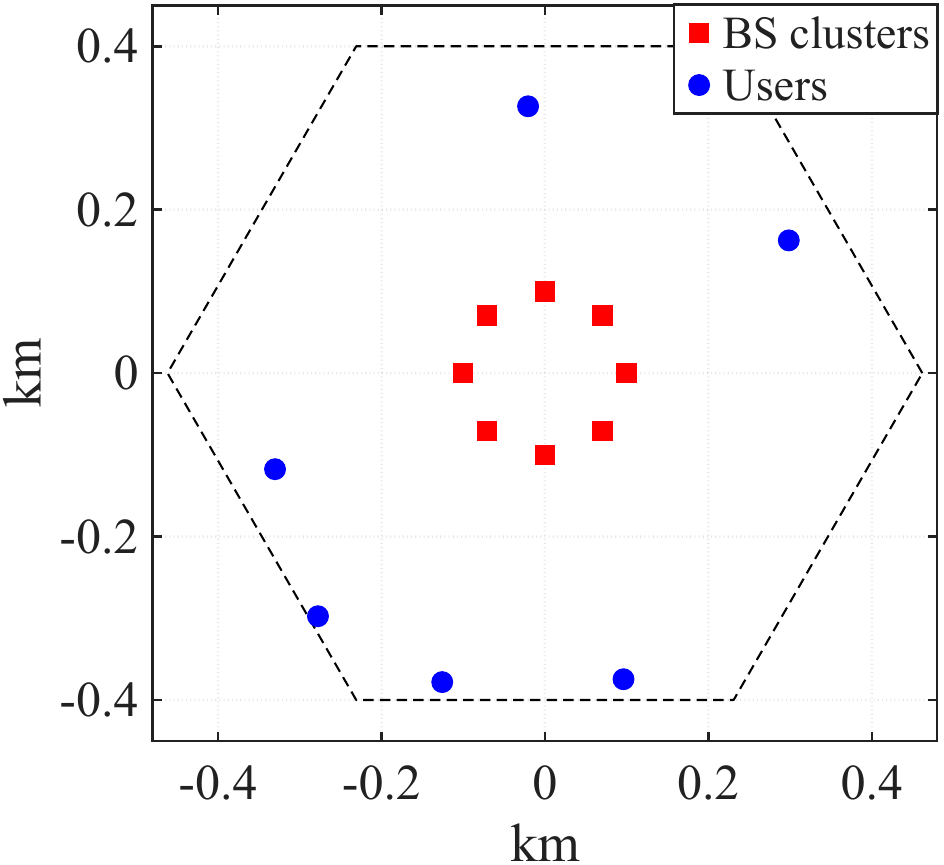}}
\caption{Network topology with hexagonal cell for $C=4$ clusters and $C=8$ clusters.}
\label{fig:network_topology}
\end{figure}

\subsection{Performance Under Clustered Antenna Configurations}

\begin{figure*}[!t]
\centering
\subfloat[when $C=4, \bP=\vecnotation{100,500, 100, 500}$]{
  \centering
  \includegraphics[width=0.47\textwidth]{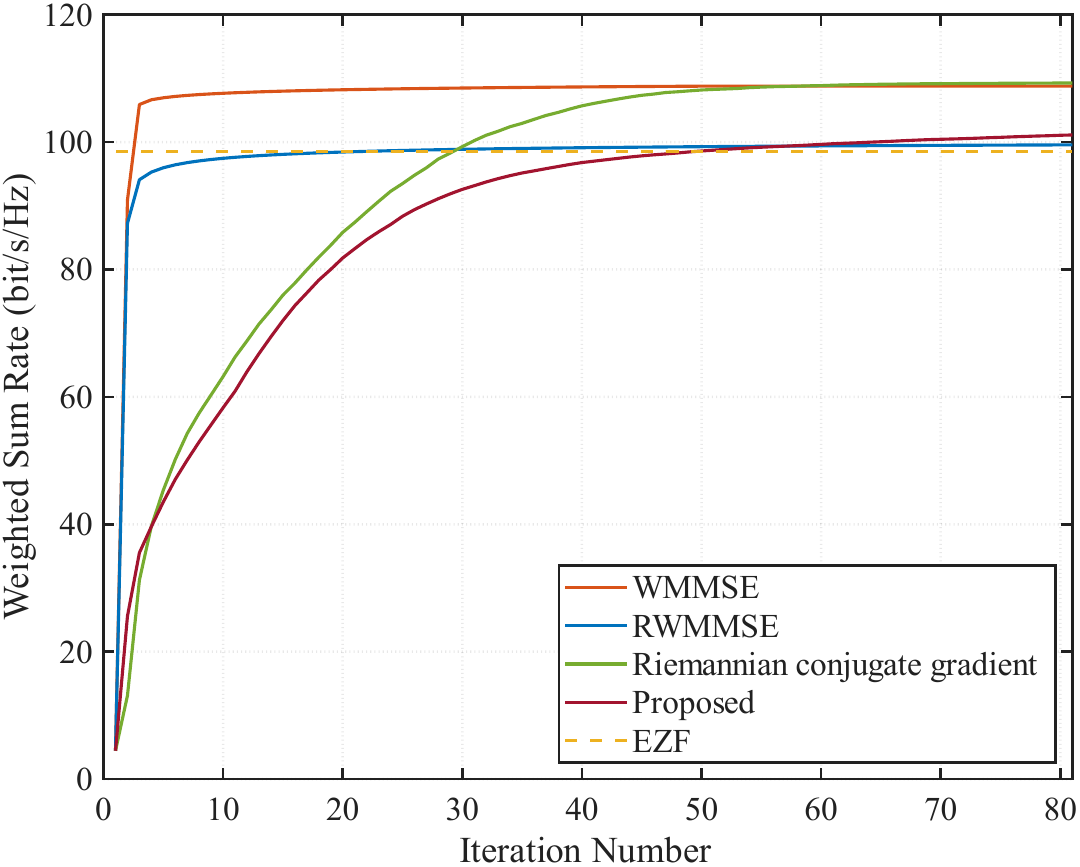}
}
\hfill
\subfloat[when $C=8$, $\bP=\vecnotation{100,500, 100, 500, 100, 500, 100, 500}$]{
  \centering
  \includegraphics[width=0.47\textwidth]{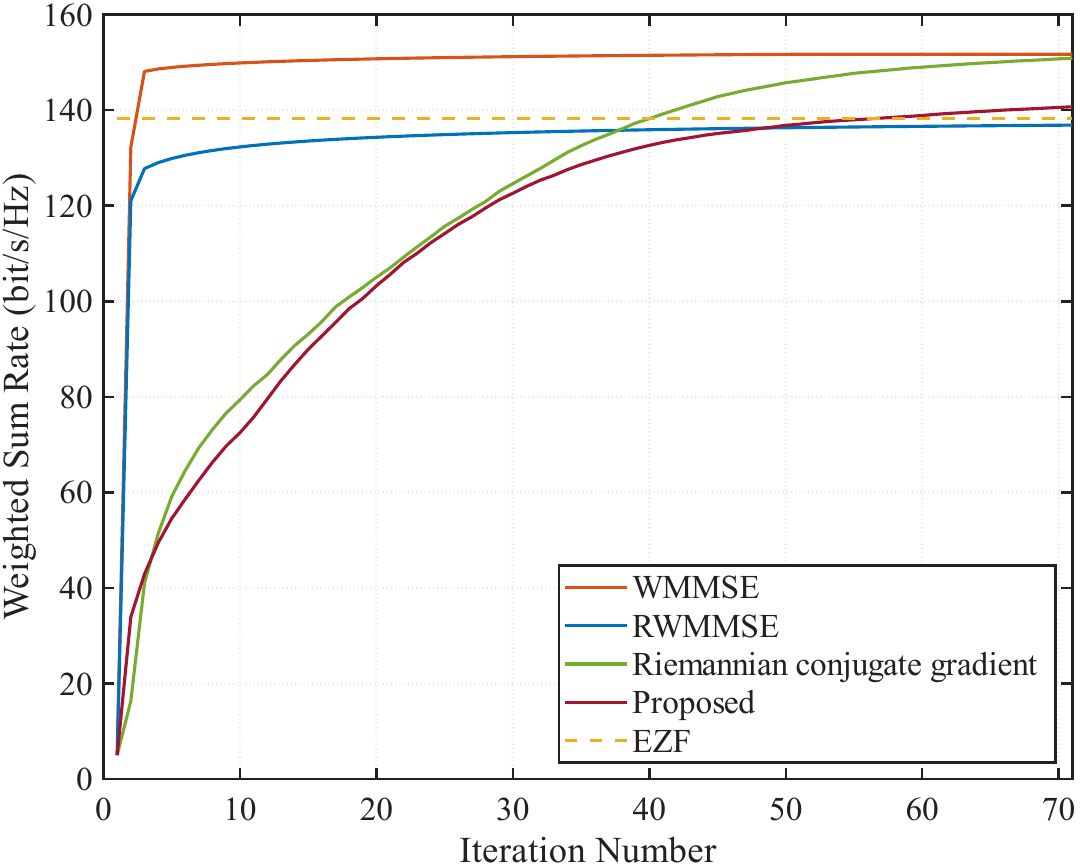}
}
\\[5pt]
\subfloat[when $C=4, \bP=\vecnotation{100,500, 100, 500}$]{
  \centering
  \includegraphics[width=0.47\textwidth]{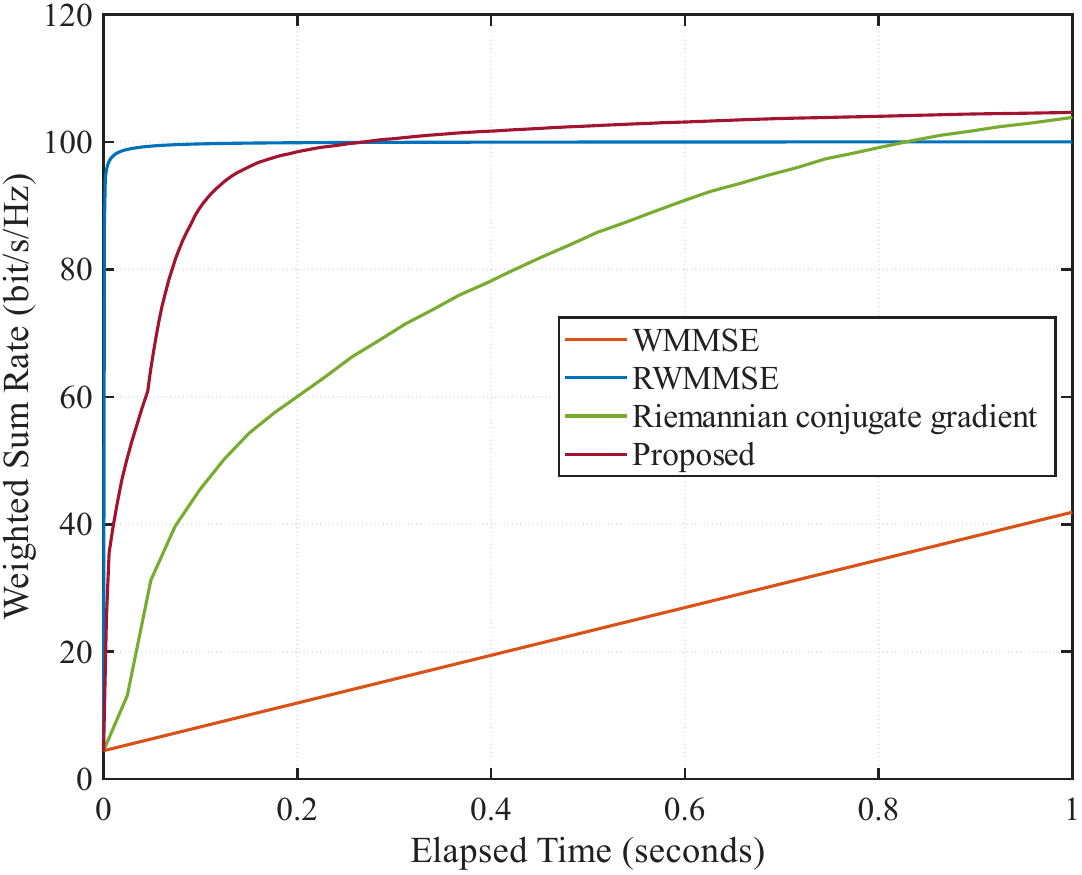}
}
\hfill
\subfloat[when $C=8$, $\bP=\vecnotation{100,500, 100, 500, 100, 500, 100, 500}$]{
  \centering
  \includegraphics[width=0.47\textwidth]{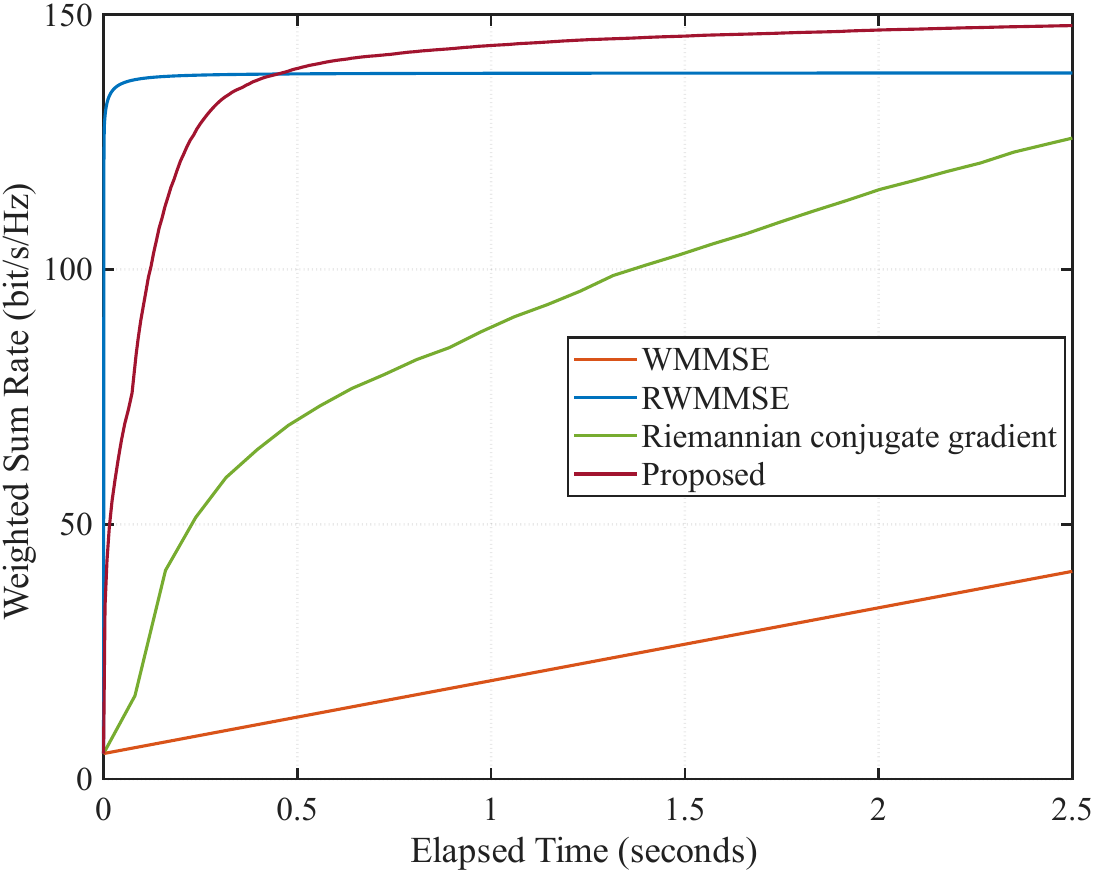}
}
\caption{Cluster performance with $L=128, K=6, N_r=4, d_k=4$. The two figures in the top row show the convergence of WSR in iterations, and the two figures in the bottom row show the convergence in elapsed time.}
\label{fig:cluster_performance}
\end{figure*}

We evaluate the performance of the proposed  algorithm with varying number of clusters $C=\{4,8\}$ and $K=6$ users. The total number of transmit antennas at the BS is $N_t=C\times L$. The power budgets for each cluster are set as follows: for $C=4$, $\bP=\vecnotation{100,500, 100, 500}$; and for $C=8$, $\bP=\vecnotation{100,500, 100, 500, 100, 500, 100, 500}$. As shown in the top row of Fig. \ref{fig:cluster_performance}, our proposed algorithm approaches the same local optimum as WMMSE and Riemannian conjugate gradient, while RWMMSE fails to reach the same optimum. Note that the apparent ``non-convergence'' of our proposed algorithm in the iteration plots is mainly due to the fixed iteration budget in the simulations: because our proposed algorithm needs more iterations to fully settle, it may not reach the final plateau within the displayed number of iterations. Nevertheless, the bottom row of Fig. \ref{fig:cluster_performance} shows that our proposed algorithm is still faster than Riemannian conjugate gradient in terms of elapsed time, since each proposed iteration is significantly cheaper after the low-dimensional reformulation. Moreover, the runtime gap between our proposed algorithm and Riemannian conjugate gradient becomes larger as the number of clusters $C$ increases. This is because the inequality $N_t=C\times L \gg CKN_r$ holds more evidently when $C$ increases, which leads to a smaller size of the reformulated decision variable ${\bX}\in\mathbb{C}^{CKN_r\times Kd_k}$ in our proposed algorithm compared to that in Riemannian conjugate gradient, i.e., ${\bV}\in\mathbb{C}^{N_t\times Kd_k}$. This demonstrates the advantage of our proposed  algorithm in handling per-cluster power constraints with a large number of antenna clusters $C$.

To further evaluate the statistical performance of all algorithms, we plot the cumulative distribution function (CDF) of the WSR achieved by each algorithm. For a fair comparison under the same computational budget, each algorithm is given a fixed running time of 2.5 seconds with $C=8$ clusters. Note that WMMSE is excluded from this comparison because its high per-iteration complexity (due to repeated matrix inversions and CVX solver calls) prevents it from completing sufficient iterations within the 2.5-second budget to achieve meaningful convergence. As shown in Fig. \ref{fig:cluster_cdf}, the proposed  algorithm achieves the highest WSR over 90\% of the CDF range, outperforming RWMMSE and Riemannian conjugate gradient. This demonstrates that, within a fixed time budget, our proposed algorithm is able to find better solutions due to its lower per-iteration complexity enabled by the low-dimensional reformulation. The CDF plot further confirms the practical advantage of our proposed algorithm for real-time or time-critical massive MIMO systems under per-cluster power constraints.


\begin{figure}[h]
\centering
\includegraphics[width=0.47\textwidth]{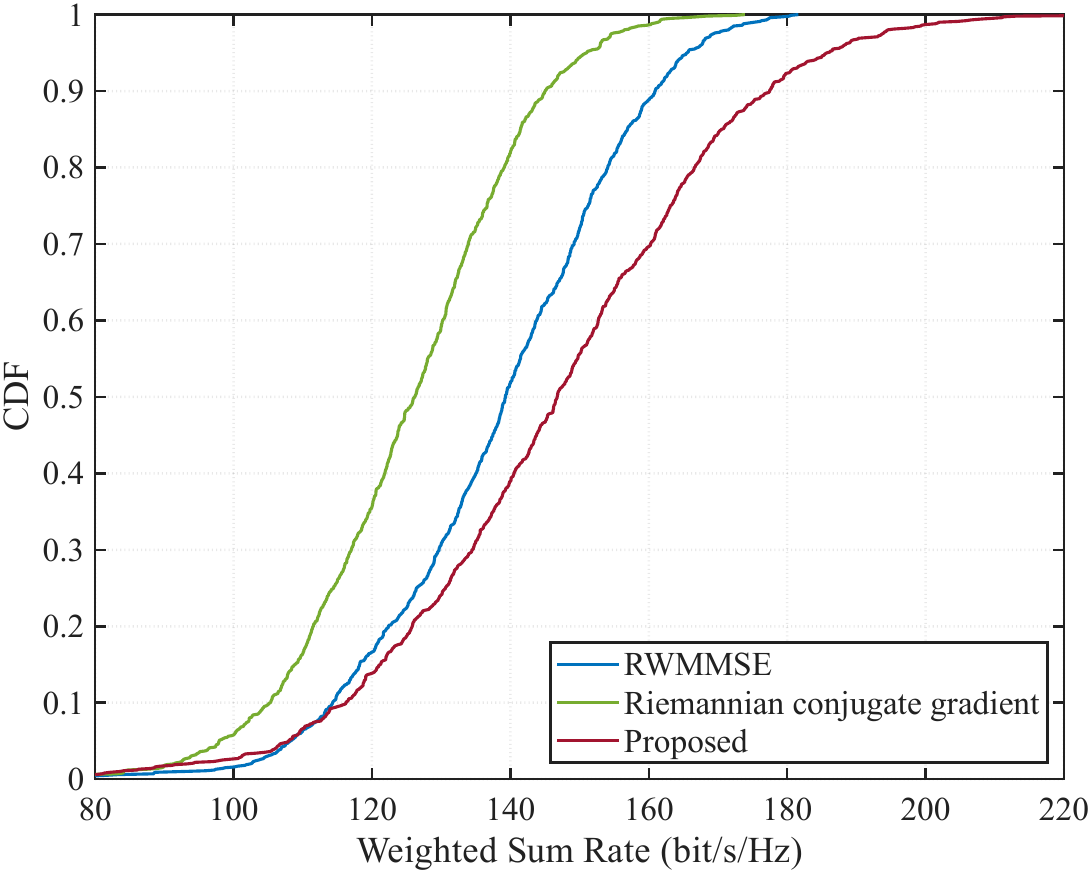}
\caption{CDF of the WSR achieved by different algorithms under cluster configurations with $C=8$, given a fixed running time of 2.5 seconds.}
\label{fig:cluster_cdf}
\end{figure}

Next, we compare the convergence time of our proposed algorithm and Riemannian conjugate gradient under different number of users $K$ and different number of antennas per cluster $L$, where the convergence time is defined as the elapsed time until the WSR change between two consecutive iterations satisfies $\left|\mathrm{WSR}^{(\ell)}-\mathrm{WSR}^{(\ell-1)}\right|<10^{-3}$. As shown in Fig. \ref{fig:cluster_time_K}, the convergence time of our proposed algorithm is lower than that of Riemannian conjugate gradient for all considered values of $K$. our proposed algorithm is more efficient than Riemannian conjugate gradient especially when the number of users is small, which follows from the smaller size of the reformulated decision variable ${\bX}\in\mathbb{C}^{CKN_r\times Kd_k}$ in our proposed algorithm compared to that in Riemannian conjugate gradient, i.e., ${\bV}\in\mathbb{C}^{N_t\times Kd_k}$, when $N_t$ is much larger than $CKN_r$. As shown in Fig. \ref{fig:cluster_time_L}, when fixing $K=6$, the convergence time of our proposed algorithm is lower than that of Riemannian conjugate gradient for all considered values of $L$. The convergence time of Riemannian conjugate gradient increases significantly as $L$ increases, while that of our proposed algorithm increases slightly. This demonstrates that our proposed algorithm is more efficient than Riemannian conjugate gradient especially when the number of antennas per cluster $L$ is large.

\begin{figure}[t]
\centering
\includegraphics[width=0.47\textwidth]{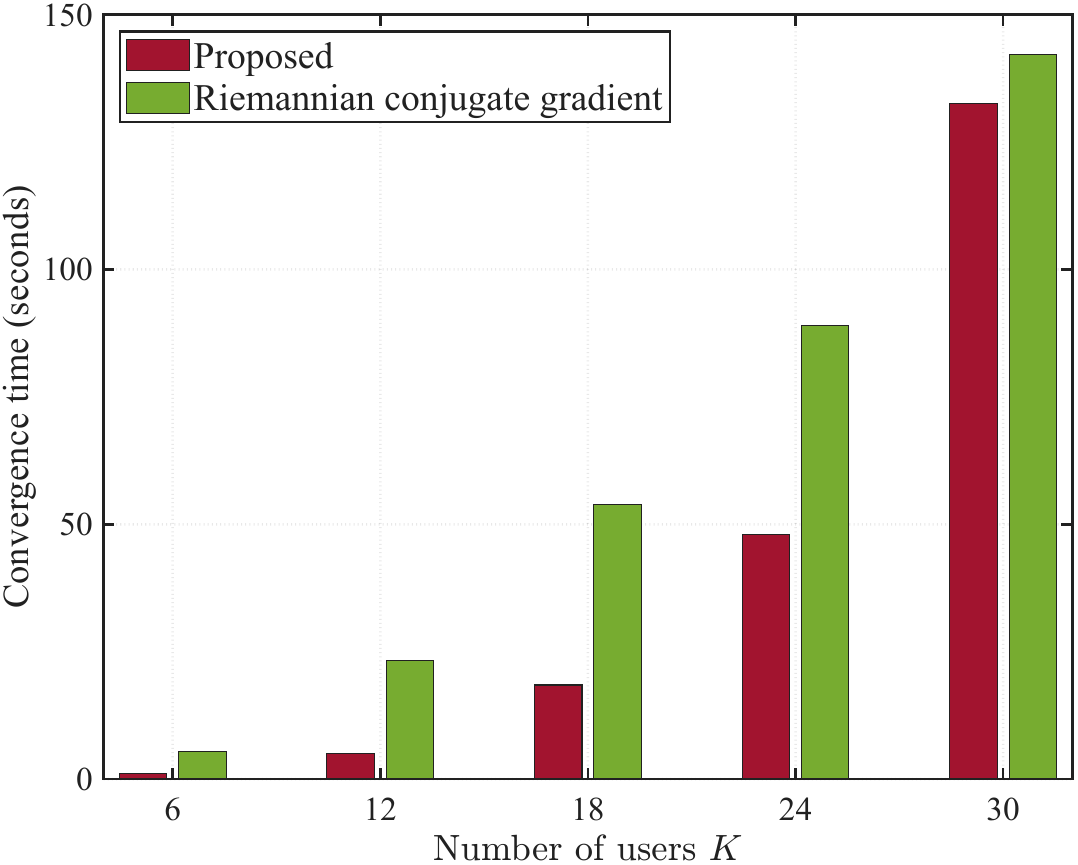}
\caption{Convergence time (elapsed time until $\left|\mathrm{WSR}^{(\ell)}-\mathrm{WSR}^{(\ell-1)}\right|<10^{-3}$) under per-cluster power constraints with $C=8, L=128, N_r=2, d_k=2$ and varying $K$.}
\label{fig:cluster_time_K}
\end{figure}

\begin{figure}[h]
\centering
\includegraphics[width=0.47\textwidth]{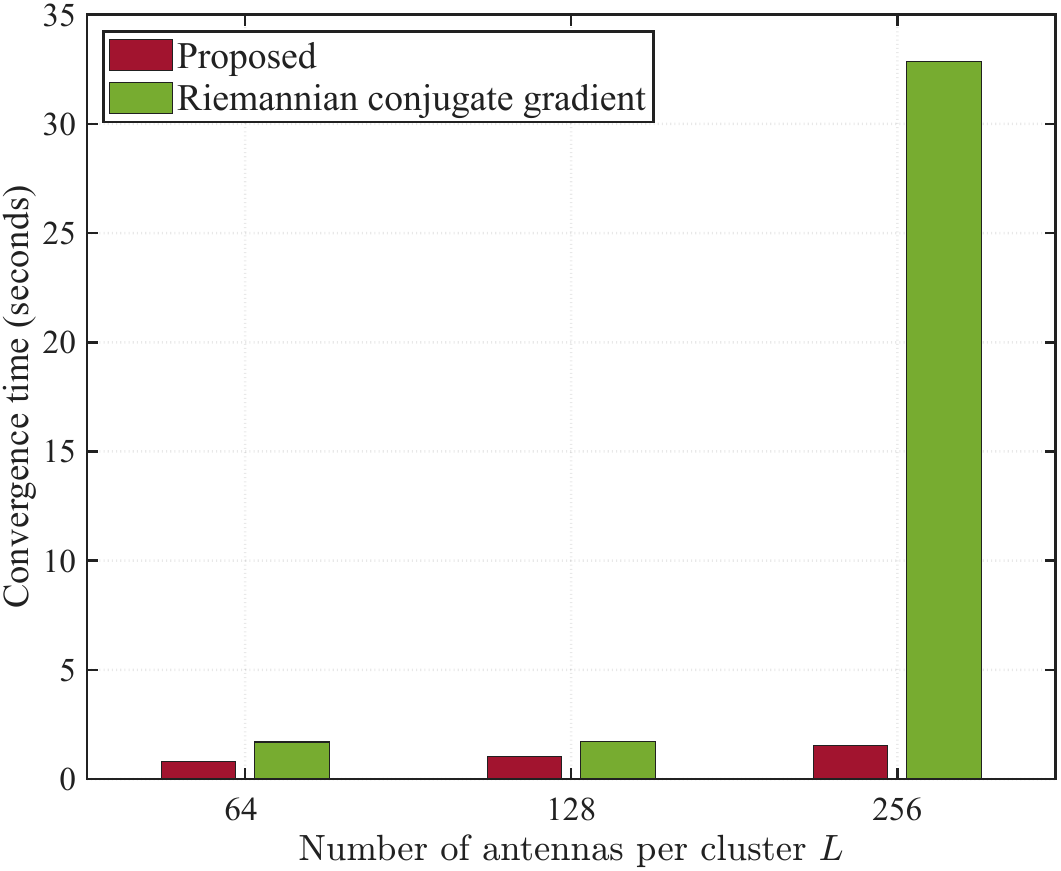}
\caption{Convergence time (elapsed time until $\left|\mathrm{WSR}^{(\ell)}-\mathrm{WSR}^{(\ell-1)}\right|<10^{-3}$) under per-cluster power constraints with $C=8, K=6, N_r=2, d_k=2$ and varying $L$.}
\label{fig:cluster_time_L}
\end{figure}

\section{Conclusion} \label{sec:conclusion}
This paper studied WSR maximization for massive MIMO downlink beamforming under per-cluster power constraints, a practically motivated setting arising from clustered-antenna architectures and cluster-wise power budgets. By extending the low-dimensional subspace property beyond the classical sum-power case, we proved that any nontrivial stationary solution admits a reduced representation where the beamformer of each antenna cluster lies in the subspace spanned by the corresponding effective channels. This yields a principled dimensionality reduction of the design variable from $N_t$ to $CKN_r$, while preserving the stationarity structure of the original problem.

Leveraging this structure, we reformulated the per-cluster power constraint-constrained WSR problem as an unconstrained optimization over a product of ellipsoid manifolds induced by cluster-wise power equalities. We derived the core Riemannian components on this product manifold, including tangent spaces, inner products, orthogonal projections, retractions, and vector transports, and developed an algorithm equipped with Hestenes--Stiefel updates and Armijo backtracking line search.

Numerical results under per-cluster power constraint scenarios demonstrated that our proposed algorithm reaches the same local optima as standard baselines such as WMMSE and Riemannian conjugate gradient, while substantially reducing elapsed time due to the smaller per-iteration computational burden enabled by the reduced formulation. The runtime gains become more pronounced as the number of antenna clusters $C$ grows and as the per-cluster antenna count increases; moreover, within a fixed time budget, our proposed algorithm achieves favorable WSR statistics in terms of the CDF compared to benchmark methods. These results highlight the effectiveness and scalability of the proposed low-dimensional manifold framework for large-scale massive MIMO precoding with structured power constraints.

\appendices
\renewcommand{\thesubsection}{\thesection.\arabic{subsection}}
\makeatletter
\renewcommand{\thesubsectiondis}{\thesubsection}
\makeatother

\section{Proof of Proposition \ref{prop:power tight}}\label{appendix:power_tight}
By defining the negative objective function of problem \eqref{eq:original} as $f({\bV}) = -\sum_{k=1}^K \omega_k R_k$. The KKT conditions of problem \eqref{eq:original} state that for any stationary point ${\bV}^*$, there exists non-negative multipliers $\mu_c^* \ge 0, c=1,\ldots,C$, such that the following conditions hold:
\begin{subequations}\label{eq:kkt}
\begin{align}
&\nabla_{\bV^{(c)}} f({\bV}^*) + \mu_c^* \, {(\bV^{(c)})^*} = \mathbf{0}, \forall c \label{eq:kkt_stationarity}\\
&\mu_c^*\, \left(\|({\bV}^{(c)})^*\|_F^2 - P_c\right) = 0, \quad \forall c, \label{eq:kkt_compslack}\\
&\|({\bV}^{(c)})^*\|_F^2\leq P_c,  \forall c, \label{eq:kkt_primal}\\
&\mu_c^* \ge 0, \quad \forall c, \label{eq:kkt_dual}
\end{align}
\end{subequations}
where \eqref{eq:kkt_stationarity} is the first-order stationarity condition with respect to ${\bV}^{(c)}$, \eqref{eq:kkt_compslack} is the complementary slackness condition, \eqref{eq:kkt_primal} is the primal feasibility condition, and \eqref{eq:kkt_dual} is the dual feasibility condition.

Next, we will prove $\mu_c > 0$ for all $c$ by contradiction. Note that we will only prove for an arbitrary $c$, since the proof is similar for all $c$. Assume, by contradiction, that $\mu_c = 0$. Accordingly, the stationarity condition \eqref{eq:kkt_stationarity} and the complementary slackness condition \eqref{eq:kkt_compslack} reduce to
\begin{subequations}
\begin{align}
&\nabla_{{\bV}^{(c)}} f({\bV}^*) = \mathbf{0}, \label{eq:reduced_stationarity}\\
&\|({\bV}^{(c)})^*\|_F^2 < P_c. \label{eq:reduced_compslack}
\end{align}
\end{subequations}
Here we pick an arbitrary user $k$ and the channel matrix from $c$th cluster to user $k$ is $\bH_{k}^{(c)}\in\mathbb{C}^{N_r \times L}$. Since $\bH^{(c)}$ has full row rank and $L\ge N_r$, we have $\operatorname{rank}(\bH_{k}^{(c)})=N_r$. According to the rank-nullity theorem, the dimension of the null space of $\bH_{k}^{(c)}$ is
$\operatorname{dim}(\operatorname{null}(\bH_{k}^{(c)})) = L - N_r > 0$. Therefore, there exists a non-zero vector $\boldsymbol{\nu} \in \mathbb{C}^{L}$ such that $\bH_{k}^{(c)} \boldsymbol{\nu} = \mathbf{0}$.

We write the channel from $c$th cluster without user $k$ as
\begin{equation}
\bH_{-k}^{(c)} = \begin{bmatrix}\bH_{1}^{(c)} \\ \vdots \\ \bH_{k-1}^{(c)} \\ \bH_{k+1}^{(c)} \\ \vdots \\ \bH_{K}^{(c)}\end{bmatrix}\in\mathbb{C}^{(K-1)N_r \times L}.
\end{equation}
Since $\bH^{(c)}$ has full row rank, its submatrix $\bH_{-k}^{(c)}$ also has full row rank. With $L\ge KN_r>(K-1)N_r$, we have $\operatorname{rank}(\bH_{-k}^{(c)})=(K-1)N_r$. Similarly, we have $\operatorname{dim}(\operatorname{null}(\bH_{-k}^{(c)})) = L - (K-1)N_r > 0.$
That is, there exists a non-zero vector $\bu \in \mathbb{C}^{L}$ such that $\bH_{-k}^{(c)} \bu = \mathbf{0}$.

We define the two non-trivial subspaces
\begin{equation}
\mathcal{U} \triangleq \operatorname{null}(\bH_{-k}^{(c)}), 
\end{equation}
\begin{equation}
\mathcal{V} \triangleq \operatorname{null}(\bH_{k}^{(c)}). 
\end{equation}
We make two mutually exclusive cases as follows. 

\begin{enumerate}
  \item \textbf{Case 1}: $\mathcal{U} \subseteq \mathcal{V}$. In this case, every vector in $\mathcal{U}$ is also in $\mathcal{V}$. Hence, for any $\bu \in \mathcal{U}$, we have
  \begin{equation}
  \bH_{-k}^{(c)}\bu=\mathbf{0}\quad\text{and}\quad \bH_{k}^{(c)}\bu=\mathbf{0}.
  \end{equation}

  \item \textbf{Case 2}: $\mathcal{U} \nsubseteq \mathcal{V}$. In this case, there exists at least one vector $\bu \in \mathcal{U}$ such that $\bu \notin \mathcal{V}$, i.e.,
  \begin{equation}
  \bH_{-k}^{(c)}\bu=\mathbf{0}\quad\text{and}\quad \bH_{k}^{(c)}\bu \neq \mathbf{0}.
  \end{equation}
\end{enumerate}

Next, we show that \textbf{Case 1} is impossible. The channel matrix $\bH_k^{(c)}$ is composed of pathloss, shadowing, and Rayleigh fading components, i.e.,$\bH_k^{(c)} = \sqrt{\beta_k^{(c)}} \bE_k^{(c)}$, where $\beta_k^{(c)}$ is the large-scale fading coefficient and $\bE_k^{(c)}$ is the Rayleigh fading matrix whose entries are independently drawn from the complex Gaussian distribution $\mathcal{CN}(0,1)$. Since $\beta_k^{(c)} > 0$, we have $\operatorname{null}(\bH_{k}^{(c)}) = \operatorname{null}(\bE_{k}^{(c)})$. Let $\bB \in \mathbb{C}^{L \times \operatorname{dim}(\mathcal{U})}$ be a non-zero basis matrix for $\mathcal{U}$, we have $\bE_{k}^{(c)} \bB = \mathbf{0}$. The probability of \textbf{Case 1} occurring can be equivalently expressed as
\begin{equation}
\Pr(\mathcal{U} \subseteq \mathcal{V}) = \Pr(\mathcal{U} \subseteq \operatorname{null}(\bE_{k}^{(c)})) = \Pr(\bE_{k}^{(c)} \bB = \mathbf{0}).
\end{equation}
Since the entries of $\bE_{k}^{(c)}$ are independently drawn from a continuous distribution, the probability that $\bE_{k}^{(c)} \bB = \mathbf{0}$ holds is zero.

Thus, \textbf{Case 1} never occurs, so it suffices to consider \textbf{Case 2}. Therefore, there exists a non-zero vector $\bu \in \mathbb{C}^{L}$ such that
\begin{equation}
\bH_{-k}^{(c)}\bu=\mathbf{0}\quad\text{and}\quad \bH_{k}^{(c)}\bu \neq \mathbf{0}. \label{eq:null}
\end{equation}
Based on this $\bu$, we define
\begin{equation}
  \Delta {\bV}^{(c)}_k = \bu \be^\hh,
\end{equation}
as a perturbation to $({\bV}_k^{(c)})^*$, where $\be\in\mathbb{C}^{d_k}$ is an arbitrary non-zero vector. For user $k^\prime \neq k$ and cluster $c^\prime \neq c$, the perturbations are set to zero, i.e., $\Delta {\bV}^{(c)}_{k^\prime} = \mathbf{0}$ and $\Delta {\bV}^{(c^\prime)} = \mathbf{0}$. The perturbed beamforming matrix is defined as
\begin{equation}
{\bV}(\epsilon) = {\bV}^* + \epsilon \Delta {\bV}, \label{eq:V_eps}
\end{equation}
where $\epsilon>0$ is a small constant such that the complementary slackness condition \eqref{eq:reduced_compslack} remains slack, i.e., $\|{\bV}^{(c)}+\epsilon \Delta {\bV}^{(c)}\|_F^2 < P_c$.

Given \eqref{eq:null}, we will analyze the rate change of each user when the perturbation \eqref{eq:V_eps} is applied. On one hand, for user $k^\prime \neq k$, its received signal is
\begin{equation}
\begin{aligned}
\bH_{k^\prime} {\bV}(\epsilon) &= \bH_{k^\prime} \left({\bV}^* + \epsilon \Delta {\bV}\right) \\
&= \bH_{k^\prime} {\bV}^* + \epsilon \bH_{k^\prime} \Delta {\bV},
\end{aligned}
\end{equation}
where $\epsilon \bH_{k^\prime} \Delta {\bV}$ can be further expressed as
\begin{equation}
\begin{aligned}
\epsilon \bH_{k^\prime} \Delta {\bV} 
&= \epsilon \bH_{k^\prime}^{(c)} \Delta {\bV}^{(c)} \\
&= \epsilon \bH_{k^\prime}^{(c)} \begin{bmatrix}\mathbf{0} & \cdots & \Delta {\bV}^{(c)}_k & \cdots & \mathbf{0}\end{bmatrix} \\
&= \epsilon \bH_{k^\prime}^{(c)} \begin{bmatrix}\mathbf{0} & \cdots & \bu \be^\hh & \cdots & \mathbf{0}\end{bmatrix} \\
&= \begin{bmatrix}\mathbf{0} & \cdots & \epsilon \bH_{k^\prime}^{(c)} \bu \be^\hh & \cdots & \mathbf{0}\end{bmatrix} \\
&= \mathbf{0},
\end{aligned}
\end{equation}
where the last equality follows from \eqref{eq:null}. Therefore, the received signal of user $k^\prime \neq k$ remains unchanged after the perturbation and the achievable rate $R_{k^\prime}$ also remains unchanged. On the other hand, for user $k$, its received signal is
\begin{equation}
\begin{aligned}
\bH_{k} {\bV}(\epsilon) &= \bH_{k} \left({\bV}^* + \epsilon \Delta {\bV}\right) \\
&= \bH_{k} {\bV}^* + \epsilon \bH_{k} \Delta {\bV},
\end{aligned}
\end{equation}
where $\epsilon \bH_{k} \Delta {\bV}$ can be further expressed as
\begin{equation}
\begin{aligned}
\epsilon \bH_{k} \Delta {\bV} 
&= \epsilon \bH_{k}^{(c)} \Delta {\bV}^{(c)} \\
&= \epsilon \bH_{k}^{(c)} \begin{bmatrix}\mathbf{0} & \cdots & \Delta {\bV}^{(c)}_k & \cdots & \mathbf{0}\end{bmatrix} \\
&= \epsilon \bH_{k}^{(c)} \begin{bmatrix}\mathbf{0} & \cdots & \bu \be^\hh & \cdots & \mathbf{0}\end{bmatrix} \\
&= \begin{bmatrix}\mathbf{0} & \cdots & \epsilon \bH_{k}^{(c)} \bu \be^\hh & \cdots & \mathbf{0}\end{bmatrix} ,
\end{aligned}
\end{equation}
where $\epsilon \bH_{k}^{(c)} \bu \be^\hh \neq \mathbf{0}$ follows from \eqref{eq:null}. The last line indicates that the desired signal of user $k$ is enhanced after the perturbation, while the interference-plus-noise power remains unchanged. Therefore, the achievable rate $R_{k}$ is strictly increased after the perturbation. Therefore, the objective function value $f({\bV}(\epsilon))$ is strictly decreased after the perturbation, i.e.,
\begin{equation}
f({\bV}(\epsilon)) < f({\bV}^*).
\end{equation}
This contradicts the reduced stationarity condition \eqref{eq:reduced_stationarity}. Therefore, our initial assumption that $\mu_c = 0$ must be false, and the complementary slackness condition \eqref{eq:kkt_compslack} implies that the power constraints are tight at any stationary point, i.e., $\|({\bV}^{(c)})^*\|_F^2 = P_c$ for all $c$.

\section{Proof of Proposition \ref{prop:reform_equiv}}\label{appendix:reform_equiv}
\subsection{Cluster-wise low-dimensional subspace property via explicit gradients}

For each user $k$,
\begin{align}
R_k
&= \log\det\left(\bI+\bV_{k}^\hh\bH_{k}^\hh\bF_{k}^{-1}\bH_{k}\bV_{k}\right)\notag\\
&= \log\det(\bS_k)-\log\det(\bF_k),
\label{eq:Rk_logdet_diff}
\end{align}
with
\begin{align}
\bF_k &= \sigma_k^2\bI+\sum_{j=1,j\neq k}^K\bH_k\bV_j\bV_j^\hh\bH_k^\hh., \label{eq:Fk_def}\\
\bS_k &= \sigma_k^2\bI + \sum_{j=1}^{K}\bH_k\bV_j\bV_j^\hh\bH_k^\hh. \label{eq:Sk_def}
\end{align}
Recall the clustered structure $\bH_k=[\bH_k^{(1)},\dots,\bH_k^{(C)}]$ and
$\bV_j=[(\bV_j^{(1)})^{\top},\dots,(\bV_j^{(C)})^{\top}]^{\top}$, so
\begin{equation}
\bH_k\bV_j=\sum_{c=1}^{C}\bH_k^{(c)}\bV_j^{(c)}.
\end{equation}
Fix any $k$ and any beamformer block $\bV_j^{(c)}$ (user $j$, cluster $c$). Let
\begin{equation}
\bS_{k,j}\triangleq \bH_k\bV_j\bV_j^\hh\bH_k^\hh.
\end{equation}
Then $\bS_k=\sigma_k^2\mathbf I+\sum_{j=1}^{K} \bS_{k,j}$ and $\bF_k=\sigma_k^2\mathbf I+\sum_{j=1,j\neq k}^{K}\bS_{k,j}$.
Using the matrix differential identity $\mathrm{d} \left(\log\det(\mathbf A)\right)=\operatorname{tr}\left(\mathbf A^{-1}\mathrm{d}\mathbf A\right)$, we obtain the differential of $R_k$ with respect to $\bV_i^{(c)}$ as
\begin{equation}
\begin{aligned}
\mathrm{d}R_k &= \operatorname{tr} \left(\bS_k^{-1}\, \mathrm{d}\bS_k\right) - \operatorname{tr} \left(\bF_k^{-1}\, \mathrm{d}\bF_k\right) \\
&= \operatorname{tr} \left(\bM_{k,i}\, \mathrm{d}\bS_{k,i}\right),
\label{eq:dRk_master}
\end{aligned}
\end{equation}
where
\begin{equation}
\bM_{k,i}=
\begin{cases}
\bS_k^{-1}, & i=k,\\[2mm]
\bS_k^{-1}-\bF_k^{-1}, & i\neq k.
\end{cases}
\label{eq:Mkj_def}
\end{equation}
Next we compute $\mathrm{d}\bS_{k,i}$ as follows
\begin{align}
\mathrm{d}\bS_{k,i}
&= \mathrm{d}(\bH_k\bV_i\bV_i^\hh\bH_k^\hh) \\
&= \bH_k^{(c)}\, \mathrm{d}\bV_i^{(c)}\, \bV_i^\hh\bH_k^\hh
\;+\; \bH_k\bV_i\, \mathrm{d}(\bV_i^{(c)})^\hh\, (\bH_k^{(c)})^\hh.
\label{eq:dSkj}
\end{align}
Plugging \eqref{eq:dSkj} into \eqref{eq:dRk_master} yields
\begin{multline}
\mathrm{d}R_k = \operatorname{tr} \left(\bM_{k,i}\, \bH_k^{(c)}\, \mathrm{d}\bV_i^{(c)}\, \bV_i^\hh\bH_k^\hh\right) \\
\quad + \operatorname{tr} \left(\bM_{k,i}\, \bH_k\bV_i\, \mathrm{d}(\bV_i^{(c)})^\hh\, (\bH_k^{(c)})^\hh\right)
\label{eq:dRk_expanded}
\end{multline}
Since $R_k$ is real-valued, by Wirtinger gradient, the gradient with respect to the complex matrix variable $\bV_i^{(c)}$ can be identified from \eqref{eq:dRk_expanded} as
\begin{equation}
\nabla_{\bV_i^{(c)}} R_k
= (\bH_k^{(c)})^\hh\, \bM_{k,i}\bH_k\, \bV_i.
\label{eq:gradVjc_Rk}
\end{equation}
\eqref{eq:gradVjc_Rk} shows that $\nabla_{\bV_i^{(c)}} R_k$ lies in the column space of $(\bH_k^{(c)})^\hh$, i.e.,
\begin{equation}
\nabla_{\bV_i^{(c)}} R_k \in \operatorname{Col}((\bH_k^{(c)})^\hh).
\label{eq:gradRk_in_range}
\end{equation}
Therefore, the gradient of $f(\bV)$ with respect to $\bV_i^{(c)}$ is
\begin{equation}
\begin{aligned}
\nabla_{\bV_i^{(c)}} f(\bV) &= -\sum_{k=1}^{K} \omega_k \nabla_{\bV_i^{(c)}} R_k \\
&= -\sum_{k=1}^{K} \omega_k (\bH_k^{(c)})^\hh\, \bM_{k,i}\bH_k\, \bV_i,
\label{eq:gradVjc_f}
\end{aligned}
\end{equation}
where the $k$th summand belongs to $\operatorname{Col}(\bH_k^{(c)})^\hh$. Therefore, the summation $\nabla_{\bV_i^{(c)}} f(\bV)$ belongs to the span:
\begin{equation}
\!\!\!\!\nabla_{\bV_i^{(c)}} f(\bV) \!\in\! \operatorname{span}\left\{\! \operatorname{Col}((\bH_1^{(c)})^\hh), \ldots, \operatorname{Col}((\bH_K^{(c)})^\hh) \!\right\}\!.
\label{eq:grad_span}
\end{equation}
Since $(\bH^{(c)})^\hh = [(\bH_1^{(c)})^\hh, \dots, (\bH_K^{(c)})^\hh]$, its column space can be expressed as
\begin{multline}
\operatorname{Col}((\bH^{(c)})^\hh)=\\
\operatorname{span}\left\{ \operatorname{Col}((\bH_1^{(c)})^\hh), \dots, \operatorname{Col}((\bH_K^{(c)})^\hh) \right\} .
\label{eq:span_eq_range}
\end{multline}
Combining \eqref{eq:grad_span} and \eqref{eq:span_eq_range}, we can conclude that $\nabla_{\bV_i^{(c)}} f(\bV) \in \operatorname{Col}((\bH^{(c)})^\hh)$. We now obtain the gradient with respect to the entire cluster block $\bV^{(c)}$ as
\begin{equation}
\nabla_{\bV^{(c)}} f(\bV) = \left[\nabla_{\bV_1^{(c)}} f(\bV), \ldots, \nabla_{\bV_K^{(c)}} f(\bV)\right],
\label{eq:gradVc_f}
\end{equation}
where each block column lies in $\operatorname{Col}((\bH^{(c)})^\hh)$. Therefore, $\nabla_{\bV^{(c)}} f(\bV) \in \operatorname{Col}((\bH^{(c)})^\hh)$.

According to Appendix \ref{appendix:power_tight}, for any stationary point $\bV^*$ of problem \eqref{eq:original}, we have $\mu_c^* > 0$ for all $c$. The stationarity condition \eqref{eq:kkt_stationarity} can be rearranged as
\begin{equation}
 ({\bV}^{(c)})^* = -\frac{1}{\mu_c^*} \nabla_{\bV^{(c)}} f({\bV}^*) \in \operatorname{Col}((\bH^{(c)})^\hh), \forall c. \label{eq:stationarity_rearranged}
\end{equation}
Hence, there exists $\bX^{(c)}$ such that
\begin{equation}
({\bV}^{(c)})^* = (\bH^{(c)})^\hh \bX^{(c)}, \forall c. \label{eq:Vc_transform}
\end{equation}
This result justifies the substitution \eqref{eq:Vkc_transform} used in the reformulation.
\vspace*{-5pt}
\subsection{Rate equivalence and the construction of $\bG_k$}

We now show that the reformulated rate $\hat{R}_k$ in \eqref{eq:rate_reform} is equivalent to the original rate $R_k$ in \eqref{eq:rate} under the substitution $\bV^{(c)}_k = (\bH^{(c)})^\hh\bX^{(c)}_k$.

Recall that the original rate is given by
\begin{equation}
R_k = \log\det\left(\bI+\bV_{k}^\hh\bH_{k}^\hh\bF_{k}^{-1}\bH_{k}\bV_{k}\right),
\end{equation}
where $\bF_k=\sigma_k^2\bI+\sum_{j=1,j\neq k}^K\bH_k\bV_j\bV_j^\hh\bH_k^\hh$.

We first derive the expression for $\bH_k\bV_j$ using the clustered structure. Since $\bH_k = [\bH_k^{(1)}, \ldots, \bH_k^{(C)}]$ and $\bV_j = [(\bV_j^{(1)})^\top, \ldots, (\bV_j^{(C)})^\top]^\top$, we have
\begin{equation}
\bH_k\bV_j = \sum_{c=1}^{C} \bH_k^{(c)} \bV_j^{(c)}.
\end{equation}
Substituting $\bV_j^{(c)} = (\bH^{(c)})^\hh \bX_j^{(c)}$ yields
\begin{equation}
\begin{aligned}
\bH_k\bV_j &= \sum_{c=1}^{C} \bH_k^{(c)} (\bH^{(c)})^\hh \bX_j^{(c)} \\
&= \bG_k \bX_j, \label{eq:HkVj_to_GkXj}
\end{aligned}
\end{equation}
where we define the effective channel matrix
\begin{equation}
\bG_k = \left[\bH_k^{(1)}(\bH^{(1)})^\hh, \ldots, \bH_k^{(C)}(\bH^{(C)})^\hh\right] \in \mathbb{C}^{N_r \times CN_r}.
\label{eq:Gk_construction}
\end{equation}

Using \eqref{eq:HkVj_to_GkXj}, the interference-plus-noise covariance matrix becomes
\begin{equation}
\begin{aligned}
\bF_k &= \sigma_k^2\bI + \sum_{j=1,j\neq k}^K \bH_k\bV_j\bV_j^\hh\bH_k^\hh \\
&= \sigma_k^2\bI + \sum_{j=1,j\neq k}^K \bG_k\bX_j\bX_j^\hh\bG_k^\hh. \label{eq:Fk_reform}
\end{aligned}
\end{equation}
Substituting \eqref{eq:HkVj_to_GkXj} and \eqref{eq:Fk_reform} into the original rate expression, we obtain
\begin{equation}
\begin{aligned}
R_k &= \log\det\left(\bI+\bV_{k}^\hh\bH_{k}^\hh\bF_{k}^{-1}\bH_{k}\bV_{k}\right) \\
&= \log\det\left(\bI + \bX_k^\hh \bG_k^\hh \bF_k^{-1} \bG_k \bX_k\right) \\
&= \hat{R}_k. \label{eq:rate_equivalence}
\end{aligned}
\end{equation}
This establishes that $R_k = \hat{R}_k$ under the substitution $\bV^{(c)}_k = (\bH^{(c)})^\hh\bX^{(c)}_k$, confirming the equivalence between \eqref{eq:equality} and \eqref{eq:reform ellipse}.

\section{Proof of Proposition \ref{prop:ellipsoid}}\label{appendix:ellipsoid}

It is obvious that $\bQ^{(c)}=\bH^{(c)}(\bH^{(c)})^\hh$ is positive semidefinite. Moreover, $\bH^{(c)}$ being full row rank implies that $\bQ^{(c)}$ is positive definite. Therefore, \eqref{eq:power_reform} is a positive definite quadratic form of ${\bX}^{(c)}$, which defines an ellipsoid naturally. 

Next, we prove that the ellipsoid is an embedded submanifold of ambient Euclidean space $\mathbb{C}^{KN_r\times Kd_k}$ using \cite[Proposition 3.3.3]{AbsilOpt}. Consider the ellipsoid
\begin{equation}
\mathcal{M}_c = \left\{{\bX}^{(c)}:\operatorname{tr}\left(({\bX}^{(c)})^\hh \bQ^{(c)}{\bX}^{(c)}\right)=P_c\right\}
\end{equation}
Define the smooth mapping $F:\mathbb{C}^{KN_r\times Kd_k}\rightarrow\mathbb R$ by
\begin{equation}
F({\bX}^{(c)})=\operatorname{tr}\left(\bQ^{(c)}{\bX}^{(c)}{\bX}^{(c)\hh}\right)-P_c.
\end{equation}
Then $\mathcal{M}_c = F^{-1}(0)$, where $\mathbb{R}$ has dimension $d = 1$.

To apply \cite[Proposition 3.3.3]{AbsilOpt}, $0$ must be a regular value of $F$, i.e., the differential $\mathrm{D}F({\bX}^{(c)}): T_{{\bX}^{(c)}} \mathbb{C}^{n \times m} \to \mathbb{R}$ has rank $1$ for all $\bX^{(c)} \in \mathcal{M}_c$. The differential is
\begin{equation}
\begin{aligned}
  \mathrm{D}F({\bX}^{(c)})[\bZ] &= 2 \mathrm{Re}\left( \operatorname{tr}\left( \bQ^{(c)} {\bX}^{(c)} \bZ^H \right) \right)\\
  & = 2 \langle\bQ^{(c)} {\bX}^{(c)}, \bZ \rangle_F.
\end{aligned}
\end{equation}
The corresponding Euclidean gradient is $\nabla F({\bX}^{(c)}) = 2\bQ^{(c)} {\bX}^{(c)}$, and
\begin{equation}
\| \nabla F({\bX}^{(c)}) \|_F = 2 \|\bQ^{(c)} {\bX}^{(c)} \|_F.
\end{equation}
Since $\operatorname{tr}\left(\bQ^{(c)}{\bX}^{(c)}{\bX}^{(c)\hh}\right)=P_c>0$, we must have $\bQ^{(c)} {\bX}^{(c)}\neq0$. Therefore, 
\begin{equation}
\| \nabla F({\bX}^{(c)}) \|_F = 2 \|\bQ^{(c)} {\bX}^{(c)} \|_F>0
\end{equation}
$\forall\bX^{(c)} \in \mathcal{M}_c$. Thus, $\nabla F({\bX}^{(c)}) \neq \mathbf{0}$, so $\mathrm{D}F({\bX}^{(c)})$ is surjective (rank $1$). Finally, by \cite[Proposition 3.3.3]{AbsilOpt}, $\mathcal{M}_c$ is an embedded submanifold of $\mathbb{C}^{KN_r\times Kd_k}$.

\bibliographystyle{IEEEtran}
\bibliography{IEEEabrv, Ref}                       

\end{document}